\newtheorem{theorem}{Theorem}
\newtheorem{proposition}[theorem]{Proposition}
\newtheorem{lemma}[theorem]{Lemma}
\newtheorem{corollary}[theorem]{Corollary}
\newcommand{\IR}{\mathds{R}}
\newcommand{\IN}{\mathds{N}}
\renewcommand{\leq}{\leqslant}
\renewcommand{\geq}{\geqslant}
\newcommand{\one}{\mathds{1}}
\begin{document}

\title{Orientation and Connectivity Based Criteria \\ for Asymptotic Consensus}

\author{Bernadette Charron-Bost\textsuperscript{1} 
}

\date{
\textsuperscript{1} CNRS, LIX, \'Ecole polytechnique, 91128 Palaiseau, France\\
}

\maketitle

\begin{abstract}%
In this article, we establish orientation and connectivity based criteria for the agreement algorithm 
	to achieve asymptotic consensus in the context of time-varying topology and
	communication delays.
These criteria unify and extend many earlier convergence results
	on the agreement algorithm for deterministic and discrete-time multiagent systems.
\end{abstract}

\section{Asymptotic consensus in a multiagent system}\label{sec:intro}

Let us consider a set of autonomous agents that interact with each other by
	exchanging values and  perform instantaneous operations on received
	values.
Agents each  start with a real value and must reach agreement on a value which is
	a convex combination of the initial values.
The agents are not required to agree exactly as in the decision problem called consensus
 	in fault-tolerance~\cite{Lyn96}, but ought to
	iteratively compute values that all converge to the same limit.
	
The motivation for this {\em asymptotic consensus problem} comes from a variety of 
	contexts involving distributed systems.
For example, sensors (altimeters) on board an aircraft could be attempting to reach
	agreement about the altitude.
Or a collection of clocks that are possibly drifting apart have to maintain clock values
	that are sufficiently close.

For the multiagent systems described above, the asymptotic consensus 
	problem has been proposed  a solution	which consists in an iterative 
		linear procedure, classically referred to as the {\em agreement algorithm}. 
It has been introduced by DeGroot~\cite{DG74} for the synchronous and time-invariant case, 
	and then has been extended  by Tsitsiklis et al.~\cite{Tsi84,TBA86} to the case of asynchronous 
	communications and time-varying  environment.
A related algorithm has been later proposed by Vicsek et al.~\cite{VCBCS95} as a model of 
	cooperative behavior.
The subject has recently attracted considerable  interest within the context of flocking
	and multiagent coordination (see for instance~\cite{JLM03,Mor05,Cha12,OSFM07} 
	for surveys and references).
	
In this article, we establish orientation and connectivity based criteria for the agreement algorithm 
	to achieve asymptotic consensus in the context of time-varying topology and
	communication delays.
These criteria unify and extend  earlier convergence results, namely the one  
		in~\cite{Tsi84, TBA86,Mor05,HB05,BHOT05,CSM05,TN11, HT11},
		and notably concern the {\em coordinated} and the {\em decentralized} models 
 		of multiagent systems that we define by simple {\em orientation} and
 		{\em connectivity} properties on their communication graphs.

Our proofs of convergence rely on uniform techniques.
They  share the same core, and only differ in the control of the convergence speed: 
	in coordinated systems,
 	the convergence speed depends quadratically on the number of agents while
	it is finite but unbounded in decentralized sytems. 

\subsection{The agreement algorithm with time-varying topology and communication delays}\label{sec:agreement}

We briefly recall the model for the agreement algorithm, and the set of assumptions 
	that are usually made.
We consider a set of $N$ agents denoted $1,\cdots,N$.
We assume the existence of a discrete global clock and we take the range
	of the clock's ticks to be the set $\IN$ of natural numbers.
The state of the agent~$i$ is captured by a scalar variable $x_i$,
	and the  value held by~$i$ at time~$t$ is denoted $x_i(t)$.
Each agent~$i$ starts with an initial value~$x_i(0)$, and the evolution
	of the local variable~$x_i$ is described by the linear transition function:
	\begin{equation}\label{eq:linear}
		x_i(t+1) = \sum_{j=1}^N A_{i,j}(t) x_j(\tau_{ij}(t)) \enspace.
	\end{equation}
Equation~(\ref{eq:linear}) corresponds to the fact that at each time~$t+1$,
	the agent~$i$ updates $x_i$ with a weighted average of the values 
	it has received at time $t$.
In the presence of communication delays, the value received by~$i$ from~$j$ 
	at time~$t$ may be an outdated value, i.e.,  sent by~$j$ at some time~$\tau_{ij}(t)$
	with $0\leq \tau_{ij}(t) \leq t$.
For each time~$t$, we form the $N \times N$ matrix~$A(t)$ and the {\em communication
	graph} $G(t) = \big( [N]\,, E(t)\big )$ which is the directed graph with a node 
	for each agent in $[N]= \{1,\cdots,N\}$ and where there is an edge from~$i$ to~$j$ 
	if and only if $A_{i,j}(t) >0$.
In other words, the agent~$i$ is connected to the agent~$j$ in $G(t)$ if~$i$ hears
	of~$j$ at time $t$.

We now formulate a series of assumptions on the matrices $A(t)$ and the delays $\tau_{ij}(t)$,
	which hold naturally in the context of a 
	multiagent system  running the agreement algorithm.
\begin{description}
	\item[A1:] Each matrix $A(t)$ is stochastic.
	\item[A2:] Each communication graph $G(t)$ contains all possible self-loops, i.e.,
		        $A_{i,i}(t) >0$ for all $i \in [N]$.
	\item[A3:] The positive entries of the matrices $A(t)$, $t\in \IN$, are uniformly lower
		bounded, i.e., there exists $\alpha \in ]0,1]$ such that $A_{i,j}(t) \in \{0\}\cup [\alpha,1]$
		for all $i,j \in [N]$ and all $t\in \IN$.
\end{description}

Concerning the delays, we assume
\begin{description}
	\item[B1:] $\tau_{ij}(t)  \leq t$, for all $i,j \in [N]$ and all $t \in \IN$.
	\item[B2:] $\tau_{ii}(t) =t$, for all $i \in [N]$ and all $t \in \IN$.
	\item[B3:] There exist  some  positive integer $\Delta$ such that 
		        $\tau_{ij}(t) \geq  \max(0, t-\Delta +1)$ for all $i,j \in [N]$
				and all $t \in \IN$.
\end{description}
Assumption A1 corresponds to the updating rules of the $x_i$'s in terms of weighted averages
	discussed above.
Assumptions A2 and B2 express the fact that an agent has an immediate access 
	to its own current value.
Assumption A3 is obviously fulfilled when the set of matrices $A(t)$ is finite.
Assumption B1 captures the fact communication does not violate causality:
	a future value of agent~$j$ cannot influence the computation of 
	agent~$i$'s value.
Finally, with assumption B3  we suppose the multiagent system to be partially synchronous, namely
	communication delays are bounded.
However since we do not require the functions $\tau_{ij}$ to be either non-decreasing, 
	surjective, or injective, communications between agents may be non-FIFO and unreliable 
	(duplication and loss). 
The case of zero communication delays is captured by assumptions A and B with $\Delta =1$, 
	and  equation~(\ref{eq:linear})
	corresponds in this case to the evolution of the agreement algorithm for a synchronous multiagent
	system.
	
\subsection{The coordinated and decentralized models}\label{sec:CD}
	
All the above assumptions are classical, contrary to the conditions C and D  we introduce now,
	that lie at the heart of our main convergence theorems.
	
Let us recall that for a directed graph~$G$, its  {\em strongly connected components}
	are, in general,  strictly included into  its {\em connected components} defined as  the 
	connected components of the undirected version of~$G$.
Let us also recall that the directed graph~$G=(V,E)$ is said to be $j$-{\em oriented}, for $j\in V$,
 	if for every node there exists a directed path originating at this node and
	terminating at~$j$~\cite{GB81}. 
If~$G$ is $j$-oriented for some node~$j$, then $G$ is said to be {\em oriented}.
We now introduce the following condition on sequences of communication graphs.
\begin{description}
	\item[ C :] At every time $t\in\IN$, the communication graph $G(t)$ is oriented.
\end{description}
Intuitively, while the communication graph is $j$-oriented, the agent~$j$ gathers the 
	values in its strongly connected component, computes  some average value,
	and attempts to impose this value to the rest of the agents as long as
	the communication graph remains $j$-oriented.
In other words, its particular position in the communication graph makes~$j$ 
	to play the role of {\em system coordinator} for the agreement algorithm.
Accordingly, we define the {\em coordinated model} as the model of multiagent systems
	which, in addition to assumptions A and B, satisfy condition C.
	
From the above discussion about the role of coordinator, it is 	
	easy to grasp why in the particular case of a steady coordinator,
	all the agents converge to a common value when running the agreement algorithm.
Our first  theorem shows that asymptotic consensus is actually achieved even when coordinators
	change over time. 

\begin{theorem}\label{thm:coord}
In the coordinated model, the agreement algorithm guarantees asymptotic consensus.
\end{theorem}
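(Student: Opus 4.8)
The plan is to reduce the theorem to a single contraction estimate on the \emph{spread} of the agents' values, and then to extract that contraction from the orientation hypothesis~C.

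First I would control the extreme values over a sliding window of length $\Delta$. For $t\in\IN$ set $W(t)=\{s\in\IN : \max(0,t-\Delta+1)\le s\le t\}$ and define $M(t)=\max_{i\in[N],\,s\in W(t)} x_i(s)$ and $m(t)=\min_{i\in[N],\,s\in W(t)} x_i(s)$. By B1 and B3 every delay $\tau_{ij}(t)$ lies in $W(t)$, so by A1 each updated value $x_i(t+1)$ is a convex combination of the values $\{x_j(s):j\in[N],\,s\in W(t)\}$ and hence satisfies $m(t)\le x_i(t+1)\le M(t)$. Since the remaining entries defining $W(t+1)$ already occur in $W(t)$, this shows that $M$ is non-increasing and $m$ non-decreasing; both being bounded, they converge to limits $M^\ast\ge m^\ast$. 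If I can prove that the spread $V(t)=M(t)-m(t)$ tends to $0$, then $M^\ast=m^\ast$, and since $m(t)\le x_i(t)\le M(t)$, every $x_i(t)$ converges to the common value $M^\ast=m^\ast$, which lies in $[m(0),M(0)]$ and is therefore a convex combination of the initial values — exactly asymptotic consensus.

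The core is then a \emph{contraction lemma}: there exist an integer $T=O(N\Delta)$ and a constant $\rho\in[0,1)$, depending only on $N$, $\Delta$ and $\alpha$, such that $V(t+T)\le\rho\,V(t)$ for all $t$. Granting this, $V(kT)\le\rho^{\,k}V(0)\to 0$ and the monotonicity of $M,m$ forces $V(t)\to 0$. To obtain the lemma I would use a \emph{closed-set} analysis. Because all positive weights are at least $\alpha$ (A3) and every agent hears its own current value (A2, B2), influence can only accumulate: if an agent's updated value comes within a small fixed fraction of the window maximum $M(t)$, then every value it heard must itself have been near $M(t)$. Consequently the set $H$ of agents still carrying a near-maximal value is \emph{closed under the out-edges} of the communication graph — no agent of $H$ may hear an agent outside $H$ — and symmetrically the near-minimal set $L$ is out-closed.

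Orientation is precisely what forbids such an extremal set from persisting, and making this quantitative is the step I expect to be the main obstacle. In a $c(t)$-oriented graph, any nonempty out-closed set must contain the coordinator $c(t)$, since the directed path guaranteed from each node to $c(t)$ cannot leave an out-closed set. Hence $H$ and $L$ cannot both be nonempty and out-closed once $V(t)>0$, as they would both have to contain $c(t)$ while being disjoint. Thus at each step at least one extreme is not shielded: either that extremal set is already empty — meaning the corresponding extreme has moved strictly inward — or some extremal agent hears a strictly interior value and is dragged inward by a factor at least $\alpha$. Letting this improvement propagate along the paths to $c(t)$, whose length is at most $N-1$ and which is inflated by the factor $\Delta$ to account for the delay window, should yield after $T=O(N\Delta)$ steps an estimate of the form $V(t+T)\le(1-\alpha^{\,T})V(t)$. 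The delicate points I would have to treat with care are: carrying the closed-set monotonicity through the $\Delta$-step window rather than a single transition; combining the two competing effects — an extreme \emph{value} moving inward versus an extremal \emph{set} shrinking, with agents possibly re-entering it — into one genuinely monotone quantity despite the coordinator $c(t)$ changing at every step; and keeping explicit track of the constants so as to recover the announced convergence speed, which for the coordinated model is quadratic in the number of agents.
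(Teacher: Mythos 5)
Your opening reduction is sound and is essentially the paper's own starting point: under A1, B1, B3 the windowed extrema $M(t)$, $m(t)$ are monotone, so asymptotic consensus is equivalent to $V(t)\to 0$; this is Proposition~\ref{pro:convergence} applied to the augmented $\Delta N$-dimensional system of Section~\ref{sec:red}. Your closure observation is also correct, and it is the paper's Lemma~\ref{lem:incoming} together with Lemma~\ref{lem:SDeltacoord} in disguise: an out-closed nonempty set must contain the coordinator, so the near-maximal set $H$ and the near-minimal set $L$ cannot both be shielded at the same step. The genuine gap is the ``contraction lemma'' itself, and the ``delicate points'' you defer in your last sentence are not technicalities to be cleaned up afterwards --- they are the whole content of Theorem~\ref{thm:coord}.

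Two concrete failures. First, ``letting this improvement propagate along the paths to $c(t)$'' is not a valid step: the paths furnished by condition~C are paths of the single graph $G(t)$, which is gone at time $t+1$, and the coordinator changes as well; an improvement created at one agent at time $t$ cannot be pushed along a path that no longer exists. Propagation over several steps under changing graphs requires an edge-by-edge, time-indexed bookkeeping of who has already been influenced --- exactly what the paper's sets $S_j^{\Delta}(t)$ and Lemmas~\ref{lem:Sincrease}--\ref{lem:incoming} provide. Second, the per-step guarantee extracted from C cannot be compounded at a single extreme the way you hope: at each step only one of $H$, $L$ is unshielded, \emph{which one} is adversary-controlled because the coordinator changes every step, the dragged agent is a different agent each time, agents re-enter $H$ and $L$ (your thresholds must decay by a factor $\alpha$ per step for the backward-propagation property to hold), and nothing stops the adversary from spreading the guaranteed progress over the $N$ possible coordinators so that no single extreme accumulates the $N-1$ improvements you need within a window of length $O(N\Delta)$. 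This is precisely why the paper does not argue extreme-by-extreme: it tracks the sets $S_j^{\Delta}(t)$ for \emph{all} $j$ simultaneously, shows that the global set $S^{\Delta}(t)$ gains at least one element per step unless some column is already full (Lemma~\ref{lem:SDeltacoord}), and concludes by a counting argument that after $\Delta N^2-2N+1$ steps some column of the product is entirely positive (Proposition~\ref{pro:ergodicfini}), whence $\lVert P(t_0+\Delta N^2-2N+1)\rVert_{\bot}\leq 1-\alpha^{\Delta N^2-2N+1}$ by Corollary~\ref{cor:boundnorm}. The quadratic window is the price of the time-varying coordinator; your claimed window $T=O(N\Delta)$ with a uniform contraction factor is strictly stronger than anything the paper proves (and than the quadratic convergence speed it announces), and your sketch contains no mechanism that could deliver it. As it stands the core lemma is unproved, and with it the theorem.
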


We introduce a second model of multiagent systems, the {\em 
	decentralized model}, in which the orientation  condition C of the coordinated model 
	is replaced by  two connectivity conditions~D1 and~D2.
Before stating them, let us recall that  a directed graph is said to be {\em completely reducible}
	 if all its connected components are  strongly connected. 
\begin{description}
	\item[D1:] For every time $t\in \IN$, the directed graph $([N]\,, \cup_{s\geq t}E(s))$ is strongly 
	connected. 
	\item[D2:] At every time $t\in \IN$, the communication graph $G(t)$ is completely reducible.
\end{description}

The second main result of this paper is the following convergence theorem for the agreement algorithm.

\begin{theorem}\label{thm:decent}
In  the decentralized model, the agreement algorithm guarantees asymptotic consensus.
\end{theorem}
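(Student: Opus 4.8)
The plan is to reduce the theorem to showing that the \emph{spread} of the system tends to $0$, and then to drive the spread down through a contraction argument over the successive intervals of connectivity supplied by~D1. Concretely, for $t\ge\Delta$ set
\[
M(t)=\max_{i\in[N]}\ \max_{t-\Delta+1\le s\le t}x_i(s),\qquad m(t)=\min_{i\in[N]}\ \min_{t-\Delta+1\le s\le t}x_i(s),
\]
the running maximum and minimum over the last $\Delta$ stored values. Since each update~(\ref{eq:linear}) is, by~A1, a convex combination of values $x_j(\tau_{ij}(t))$ that all lie in the window $[\max(0,t-\Delta+1),t]$ by~B1 and~B3, every new value lies in $[m(t),M(t)]$. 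Hence $M$ is non-increasing and $m$ is non-decreasing, so both converge, and asymptotic consensus becomes equivalent to $M(t)-m(t)\to 0$. First I would record these monotonicity facts, which use only assumptions~A and~B.

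The engine of the proof is a single contraction estimate: there is a constant $\eta=\eta(N,\alpha,\Delta)\in(0,1)$ such that, for every $t_0$, if $t_1>t_0$ is the first time for which $\big(\,[N],\ \bigcup_{t_0\le s<t_1}E(s)\,\big)$ is strongly connected, then $M(t_1+\Delta)-m(t_1+\Delta)\le(1-\eta)\,\big(M(t_0)-m(t_0)\big)$. Assumption~D1 guarantees that such a finite $t_1$ exists (the infinite union from $t_0$ onward is strongly connected and there are only finitely many possible edges) and, applied afresh at $t_1+\Delta$, that these windows chain to cover all of $\IN$. Iterating the estimate over this infinite sequence of windows forces $M(t)-m(t)\to 0$, which is the conclusion.

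To prove the contraction I would exploit~D2 through its cut-theoretic consequence: complete reducibility means reachability is symmetric, so that for \emph{every} node set $S$ and every time, some agent outside $S$ hears from an agent in $S$ if and only if some agent in $S$ hears from an agent outside $S$. Fix the window, normalise so that $m(t_0)=0$ and $M(t_0)=W$, and aim to show the minimum rises to a fixed fraction of $W$. I would initialise $S$ as the set of agents attaining the value $W$ and enlarge it by adding, each time a crossing edge between $S$ and its complement fires, an outside agent that thereby hears from $S$ and is pulled upward by a definite amount; the amount is controlled by the uniform bound~$\alpha$ of~A3 and the self-loops of~A2 and~B2, and across the at most $N-1$ layers it compounds to a fixed fraction of $W$. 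Strong connectivity of the window union forces such a crossing while $S$ is proper, and its cut-balance counterpart guarantees a crossing in the favourable direction, so that after at most $N-1$ firings $S=[N]$ and every agent sits above a fixed fraction of $W$, yielding a bound of the form $\eta\ge\alpha^{\,cN\Delta}$.

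The main obstacle is to make $\eta$ \emph{independent of the length} $t_1-t_0$ of the window. This is precisely where~D1 alone is insufficient: between two consecutive crossing edges the waiting time is finite but unbounded, and during such a wait a self-loop averages an already-advanced agent back toward the bulk at a geometric rate, threatening to erase its progress. The cut-balance property of~D2 is what rescues the argument: any interaction that would pull an agent of $S$ back toward the old extreme is matched, on the same cut, by an interaction pulling an opposite-extreme agent inward, so the spread can only benefit and the progress recorded in $S$ is never undone. This is exactly why orientation-free connectivity must be supplemented by~D2, and why the resulting convergence is finite but of unbounded speed. The remaining bookkeeping is the delays, handled throughout by working with the $\Delta$-windows above, since~B1--B3 keep every value used at time $t$ inside the current window and keep each agent's own value available through its self-loop.
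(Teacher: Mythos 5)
Your skeleton matches the paper's strategy at the top level: reduce consensus to decay of the spread, extract from D1 an infinite sequence of windows, prove a contraction factor $1-\eta$ with $\eta\ge\alpha^{cN\Delta}$ \emph{uniform in the window length}, and let D2 enter through its cut property (a set with an outgoing edge must also have an incoming edge, since its connected component is strongly connected). You also isolate the right crux: independence of $\eta$ from $t_1-t_0$. The genuine gap is that your resolution of this crux --- ``any interaction that would pull an agent of $S$ back toward the old extreme is matched on the same cut by an interaction pulling an opposite-extreme agent inward, so the progress recorded in $S$ is never undone'' --- is an assertion, not an argument, and its natural formalization fails. If $S$ is the set of agents certified to lie above some threshold, then $S$ is \emph{not} monotone: when an agent $i\in S$ hears from outside $S$, D2 does force a simultaneous incoming edge recruiting some new agent $u$, but nothing protects $i$ itself, whose value may drop by a factor essentially $\alpha$ (only its self-loop is guaranteed), while $u$ is recruited only at $\alpha$ times the level of the agent it heard. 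Dilution and recruitment can thus co-occur at every step of an unboundedly long window, previously certified agents can fall out and be re-recruited at ever lower levels ($\alpha^N W$, $\alpha^{N+1}W,\dots$), and no uniform $\eta$ emerges. Counting $|S|$ does not help, because the same agent can leave and re-enter arbitrarily often; the decay you yourself flagged as the obstacle is never actually ruled out.

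What is missing is a monotone invariant, and that is precisely what the paper's proof supplies by tracking \emph{weights} instead of \emph{values}. Fix $j$ and consider the backward product $P(t)=A^{\Delta}(t)\cdots A^{\Delta}(t_0)$ of the delay-augmented matrices; let $S_j^{\Delta}(t)$ be the set of indices whose entry in the $\Delta j$-th column is positive, and $\pi_j(t)$ the minimum positive entry of that column. Self-loops (A2) make the support non-decreasing (Lemmas~\ref{lem:Sincrease} and~\ref{lem:SDeltaincrease}), so here ``progress'' genuinely cannot be undone; $\pi_j$ loses at most a factor $\alpha$ per step (Lemma~\ref{lem:phi}); and the key chain --- stationary support $\Rightarrow$ no incoming edge (Lemma~\ref{lem:incoming}) $\Rightarrow$ by D2 no outgoing edge $\Rightarrow$ $\pi_j$ does not decrease at all (Lemma~\ref{lem:outgoing}) --- ties every dilution event to a strict growth of the support, which can occur at most $\Delta N$ times. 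This yields $\pi_j(t)\ge\alpha^{\Delta N}$ for \emph{all} $t$, uniformly in the window length; once D1 has made every column positive (Proposition~\ref{pro:ergodic}, which also needs the edges of the D1-paths to recur with spacing at least $\Delta$ so they can traverse the delay pipeline --- a point your window definition glosses over), Corollary~\ref{cor:boundnorm} gives $\lVert P(\theta)\rVert_{\bot}\le 1-N\alpha^{\Delta N}$, which is exactly the uniform contraction you wanted, and the conclusion follows from sub-multiplicativity and Proposition~\ref{pro:convergence}. So your plan is repairable, but only by replacing the threshold sets of values with the supports of the columns of the matrix product (equivalently, the weights each agent's current state places on $x_j(t_0)$); as written, the central estimate of your proof is unproved.
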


The convergence mechanism behind this result can be understood as follows in 
	intuitive terms: there is no source (that is, no node without incoming edge) in a 
	completely reducible directed graph, and thus for the decentralized model,
	there is no dead end in the information flow
	of each connected component of the communication graph.
The strong connectivity condition~D1 guarantees that the
	values computed in each connected component are then spread out over
	the whole system.
Even if at a given time, the roles performed by agents may be not at all equivalent
 	since the communication graph may be non-symmetric, all the agents
 	eventually play the same role over time and converge to the same value. 

Because of the self-loop assumption A2, the decentralized model
	corresponds to  a weak form of ergodicity, namely 
	each matrix $A(t)$ is block ergodic.
Similarly, a close inspection of our proof of Theorem~\ref{thm:coord} 
	reveals that in the coordinated model, each matrix $A(t)$ is ``partially ergodic'' 
	in the sense that there exist some index $j$ in $[N]$ 
	and some positive integer $n$ such that all the entries in the $j$-th column of 
	the matrix $\big (A(t)\big)^n$ are positive.
Clearly a matrix that is both block ergodic and partially ergodic is ergodic.
Since a strongly connected directed graph is oriented with respect to each of
	its nodes, the intersection of the coordinated model and the decentralized
	model indeed coincides with the model where communication graphs are all 
	strongly connected.

\subsection{Some strengthenings of our convergence theorems}\label{sec:generaltheorem}

Theorems~\ref{thm:coord} and~\ref{thm:decent} admit generalizations --- which turn out to be useful 
 	in applications --- where conditions~C and~D
	are weakened in diverse directions while maintaining convergence.
	
Firstly, convergence is maintained when condition~C (resp. D) holds only eventually: C (resp. D)
	may be violated during a finite period, but is supposed to hold from  some time onward.
Roughly speaking, that corresponds to a realistic situation where the multiagent system
	stabilizes after some transient phase during which the communication graph is arbitrary, 
	of duration unknown to the agents.
	
Secondly, we may address the issue of the {\em granularity} at which condition~C (resp. D)
	holds: instead of observing  the multiagent system at each time $t\in \IN$, we might 
	have access to its state only at the end of each period of time of a fixed duration $\Phi$.
Formally, that corresponds to the introduction of a new time scale $u\in \IN$, and to let
	$t = \Phi u$.
In the synchronous case where we assume $\tau_{ij}(t)=t$, the evolution equation
	(\ref{eq:linear}) takes the form 
	\begin{equation}\label{eq:granularity}
	x_i\big( \Phi(u+1)\big) = \sum_{j=1}^N \tilde{A}_{i,j}(u) x_j(\Phi u) 
	\end{equation}
	where $\tilde{A} = A(\Phi u + \Phi - 1) \cdots A(\Phi u)$.
In other words, in the synchronous model, the change of granularity amounts to
	grouping the matrices in blocks of length~$\Phi$, and in replacing
 	each block by the product of the matrices in the block.
It is remarkable that, in the general non synchronous case captured by assumption B, where
	the maximum delay $\Delta$ is any positive integer and where the system
	evolution does {\em not} follow equation (\ref{eq:granularity}) anymore,
	asymptotic consensus is still guaranteed under a simple extension of 
	condition C involving  matrix products of length $\Phi$.
	
Actually, these two types of weakening of conditions C and D can be combined in  
	the following formulations, respectively.
\begin{description}
	\item[$\Diamond$C:] There exist a time $T_0\in \IN$ and a positive integer $\Phi$
	such that at every time $t \geq T_0$, the communication graph~$H(t)$ of the product 
	$A(t+\Phi -1) \cdots A(t)$ is oriented.
	\item[$\Diamond$D2:] There exist a time $T_0\in \IN$ and a positive integer $\Phi$
	such that at every time $t \geq T_0$, the communication graph~$H(t)$ of the product 
	$A(t+\Phi -1) \cdots A(t)$ is completely reducible.
\end{description}

Simple extensions of our proofs  will allow us to show 
	the following generalizations of Theorems~\ref{thm:coord} 
	and~\ref{thm:decent}.

\begin{theorem}\label{thm:diamondC}
Under assumptions A and B,	the agreement algorithm guarantees asymptotic consensus
	when condition $\Diamond$C holds.
\end{theorem}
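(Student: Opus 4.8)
The plan is to isolate the contraction estimate that drives the proof of Theorem~\ref{thm:coord} and to reapply it at the coarser time scale dictated by $\Phi$, after first disposing of the eventual nature of $\Diamond$C by a time shift. Throughout, assumptions A and B guarantee that every $x_i(t)$ stays in the convex hull $[\min_k x_k(0), \max_k x_k(0)]$, so the scalar $V(t) = \max_i x_i(t) - \min_i x_i(t)$ is bounded, and asymptotic consensus is exactly the statement $V(t)\to 0$. Since the only thing I will use about the behaviour before $T_0$ is that it leaves the system in some bounded state, I would shift time so that $T_0 = 0$: formally, I treat the values $x_i(s)$ on the initial window $[0,\Delta)$ of the shifted system as arbitrary but finite initial data — legitimate because the convergence conclusion does not depend on the particular initial values — which reduces $\Diamond$C to the case where $H(t)$ is oriented for \emph{every} $t\in\IN$.

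Next I would handle the granularity. Grouping time into consecutive windows $[k\Phi,(k+1)\Phi)$, I want a contraction lemma of the form $\overline{V}\big((k+1)\Phi\big)\le\lambda\,\overline{V}(k\Phi)$ with a fixed $\lambda<1$, where $\overline{V}(t)=\max_i\max_{t-\Delta<s\le t}x_i(s)-\min_i\min_{t-\Delta<s\le t}x_i(s)$ is the delay-aware spread over the last $\Delta$ recorded values, which is monotone non-increasing under assumptions A and B. Iterating such a lemma gives $\overline{V}(k\Phi)\le\lambda^k\overline{V}(0)\to 0$, and monotonicity between window boundaries then yields $V(t)\to 0$. The content of the lemma is that orientation of $H(k\Phi)$ toward some coordinator $j$ forces $\overline{V}$ to shrink by a definite factor: since $j$ is reachable by directed paths from every node, its value is averaged, with accumulated weight bounded below by a power of $\alpha$, into every agent's value by the window's end; depending on whether $j$'s value lies in the lower or the upper half of the current interval $[\,\underline{m},\overline{M}\,]$, this pulls the running maximum down or the running minimum up by at least a fixed fraction of $\overline{V}$. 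This is the contraction mechanism already present in the proof of Theorem~\ref{thm:coord}, now invoked once per window rather than once per tick.

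The main obstacle is reconciling the two notions of reachability in play. Condition $\Diamond$C constrains the \emph{synchronous} product $A(k\Phi+\Phi-1)\cdots A(k\Phi)$, whose graph $H(k\Phi)$ records which agents could influence which others if values propagated without delay; but the true trajectory replaces each $x_j(t)$ by the delayed value $x_j(\tau_{ij}(t))$, so the dynamics do not factor through these products. I therefore need to show that a synchronous directed path $i=i_0\to i_1\to\cdots\to i_\Phi=j$ witnessing orientation of $H(k\Phi)$ still transmits influence from $j$ to $i$ in the delayed system with a weight bounded below. The key levers are the self-loop assumptions A2 and B2 together with the delay bound B3: whenever an agent reads a neighbour it reads a value at most $\Delta$ steps stale, and the uniform self-weight $\ge\alpha$ links an agent's consecutive values, so staleness can be absorbed by chaining self-loops at the cost of extra factors of $\alpha$ and a bounded enlargement of the window. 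Making this quantitative — and in particular controlling that different agents may read the coordinator at different times, so that the single threshold against which one tests ``lower half versus upper half'' must be chosen uniformly over the whole window — is where the simple extension of the proof of Theorem~\ref{thm:coord} actually does its work.

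Finally, once the per-window contraction $\overline{V}\big((k+1)\Phi\big)\le\lambda\,\overline{V}(k\Phi)$ is in hand with $\lambda<1$ independent of $k$, the conclusion is immediate: $\overline{V}$ decays geometrically, hence $V(t)\to 0$ and all $x_i(t)$ converge to a common limit, which by A1 is a convex combination of the initial values. I expect $\lambda$ to degrade with $N$, $\Phi$, and $\Delta$ (the quadratic-in-$N$ dependence of the coordinated model, now multiplied by the window length), but since only $\lambda<1$ is needed, this does not affect the statement.
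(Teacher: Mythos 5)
Your outer skeleton (shift time to dispose of $T_0$, work with a delay-aware spread, establish a block contraction, then iterate) is the right frame, but the central lemma you rely on is false, and this is a genuine gap. You claim that orientation of $H(k\Phi)$ toward a coordinator $j$ forces the spread to shrink by a fixed factor $\lambda<1$ \emph{within that single window}, because ``$j$ is reachable by directed paths from every node, [so] its value is averaged \ldots into every agent's value by the window's end.'' An edge of $H(k\Phi)$ from $i$ to $k$ only asserts that $i$'s value at time $(k+1)\Phi$ puts positive weight on $k$'s value at time $k\Phi$; a directed path of length $\ell\geq 2$ in $H(k\Phi)$ transmits no influence inside that window --- chaining its edges would require $\ell$ \emph{consecutive} windows, and the graphs $H\big((k+1)\Phi\big), H\big((k+2)\Phi\big),\dots$ of those later windows are different and may be oriented toward different coordinators, so the paths cannot simply be composed. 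Concretely, take $N=3$, $\Delta=\Phi=1$, and every $G(t)$ equal to the graph with self-loops and edges $2\to 1$, $3\to 2$ (which is $1$-oriented), with all positive weights $1/2$ except $A_{1,1}=1$: starting from $x(0)=(0,1,1)$ one gets $x(1)=(0,1/2,1)$, so the spread remains $1$ after the window. No per-window contraction holds even with a steady coordinator; the factor $\lambda<1$ only emerges after on the order of $N$ windows for a fixed graph, and with changing coordinators even that argument breaks down.

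What is missing is a mechanism that accumulates influence \emph{across} windows whose oriented graphs point at different coordinators, and this is precisely what the paper's machinery supplies. The paper works with the augmented delay-free system of Section~\ref{sec:red} and tracks the monotone sets $S^{\Delta}_j(t)$ of positive entries in the $\Delta j$-th column of $P(t)=A^{\Delta}(t)\cdots A^{\Delta}(t_0)$. The bridge between the delayed dynamics and condition $\Diamond$C is not path-chasing but Lemma~\ref{lem:incoming} applied repeatedly: if $S^{\Delta}_j$ is stationary over a whole window, then $S_j$ has no incoming edge in any of $G(t+1),\dots,G(t+\Phi)$, hence none in the product graph $H(t+1)$; orientation of $H(t+1)$ then forces $S_j=[N]$, and Lemma~\ref{lem:key} gives $S^{\Delta}_j=[\Delta N]$ (this is the windowed analogue of Lemma~\ref{lem:SDeltacoord}). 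Thus in each window either the global set $S^{\Delta}$ strictly grows or some column of $P$ is already entirely positive, and the counting argument of Proposition~\ref{pro:ergodicfini} yields a fully positive column within roughly $\Phi(\Delta N^2-2N+1)$ steps of $t_0$, with entries at least $\alpha^{\Phi(\Delta N^2-2N+1)}$ by Lemma~\ref{lem:phi}; Corollary~\ref{cor:boundnorm} then bounds the seminorm of that block by $1-\alpha^{\Phi(\Delta N^2-2N+1)}$, and sub-multiplicativity together with Proposition~\ref{pro:convergence} concludes. So the contraction is over $O(\Phi\Delta N^2)$ steps, not one window; without this stationarity-and-counting argument (or an equivalent device) your proof does not close.
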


\begin{theorem}\label{thm:diamondD}
Under assumptions A and B,	the agreement algorithm guarantees asymptotic consensus
	when condition $\Diamond$D, the conjunction of D1 and $\Diamond$D2, holds.
\end{theorem}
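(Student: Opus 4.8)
The plan is to obtain Theorem~\ref{thm:diamondD} by reducing condition~$\Diamond$D to the plain decentralized model of Theorem~\ref{thm:decent}, peeling off one at a time the two features that separate $\Diamond$D from~D: that $\Diamond$D2 is required only from time~$T_0$ onward (the \emph{eventual} relaxation), and that it constrains the length-$\Phi$ products $A(t+\Phi-1)\cdots A(t)$ rather than the individual graphs~$G(t)$ (the \emph{granularity} relaxation). Throughout I use the elementary consequence of A1, A2 and~B that each $x_i(t+1)$ is a convex combination, with every occurring weight $\ge\alpha$, of the values $x_j(\tau_{ij}(t))$ read over the window $[t-\Delta+1,t]$; hence the windowed extrema $M(t)=\max\{x_i(s):i\in[N],\,t-\Delta+1\le s\le t\}$ and $m(t)=\min\{x_i(s):i\in[N],\,t-\Delta+1\le s\le t\}$ are monotone, the spread $\delta(t)=M(t)-m(t)$ is non-increasing and bounded, and asymptotic consensus is exactly the statement $\delta(t)\to0$.

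The eventual relaxation is harmless. Since $\delta$ is non-increasing for \emph{all} $t$ and every contraction estimate below uses only windows whose left endpoint is $\ge T_0$, the finite prefix $t<T_0$ cannot affect $\lim\delta$; the values at time $T_0$ together with their length-$\Delta$ history merely act as bounded initial data, so it suffices to take $T_0=0$. Consider first the synchronous case $\tau_{ij}(t)=t$. Then the coarse evolution on the scale $u\mapsto\Phi u$ is governed \emph{exactly} by~(\ref{eq:granularity}) with $\tilde A(u)=A(\Phi u+\Phi-1)\cdots A(\Phi u)$. Each $\tilde A(u)$ is stochastic, has positive diagonal by~A2 and positive entries bounded below by $\alpha^{\Phi}$ by~A3, so the coarse sequence is itself a legitimate instance of the agreement algorithm whose communication graph at step~$u$ is exactly $H(\Phi u)$, with edge set $\tilde E(u)$ say; thus $\Diamond$D2 becomes precisely~D2 for the coarse system. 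Condition~D1 also passes to the coarse scale: by~A2 an edge of $G(s)$ embeds, by padding with self-loops at the other times of its block, into the product graph $H(\Phi u)$ of that block, so that $\bigcup_{s\ge\Phi v}E(s)\subseteq\bigcup_{u\ge v}\tilde E(u)$ for every~$v$, and strong connectivity of the fine tail forces that of the coarse tail. Theorem~\ref{thm:decent} applied to $\tilde A$ then gives $x_i(\Phi u)\to c$, and the monotonicity of $\delta$ squeezes the intermediate fine values to the same limit.

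The delayed case is the heart of the matter: as the text stresses, the coarse evolution is \emph{not} the product $\tilde A(u)$, so a positive entry of $H(t)$ no longer certifies an actual dependency of $x_i(t+\Phi)$ on~$x_j$. The bridge I would build is a realization lemma asserting that the connectivity recorded by $H(t)$ is nonetheless realized, over a window whose length $C=C(\Phi,\Delta)$ is inflated to absorb the delays, as a genuine influence carrying weight $\ge\alpha^{C}$: the self-loops (A2, B2) let each agent hold the quantity it must pass on, and the bound $\Delta$ on the delays (B3) limits how long one must wait before a neighbour's delayed read is sure to capture it. Granting this lemma, the combinatorial content of $\Diamond$D2 and D1 transfers to genuine influence: complete reducibility of $H(t)$ turns the delayed dynamics, over each length-$C$ window, into a block-ergodic mixing (the ``block ergodicity'' noted after Theorem~\ref{thm:decent}), contracting the spread within each connected component by a factor at most $1-\alpha^{C}$, while D1 forces these components to keep merging so that a genuinely global mixing recurs infinitely often.

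Feeding these contractions into the argument that already proves Theorem~\ref{thm:decent}---now read on the coarse scale and only for $t\ge T_0$---yields $\delta(t)\to0$, hence asymptotic consensus; the convergence speed stays finite but not uniformly bounded, exactly as in the decentralized model, because D1 controls only the recurrence of the mixing and not its rate. I expect the realization lemma to be the one real obstacle, and its difficulty is intrinsic: a delayed read may hand an agent an outdated value of its neighbour \emph{before} that neighbour has absorbed the information being tracked, so the synchronous product silently overstates the instantaneous information flow. Repairing this by self-loop ``waiting,'' and pinning down the correct window length $C(\Phi,\Delta)$ and weight $\alpha^{C}$, is where the genuine work sits; the rest is the monotonicity of $\delta$ and a verbatim reprise of the decentralized contraction.
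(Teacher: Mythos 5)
Your synchronous-case reduction is sound --- it is essentially the observation the paper itself makes around equation~(\ref{eq:granularity}) --- and your treatment of the eventual relaxation via the monotone windowed spread is fine. But the actual content of the theorem is the delayed case (assumption B with arbitrary $\Delta$), and there your argument stops at an unproven ``realization lemma'' which you yourself flag as ``the one real obstacle.'' That is a genuine gap, not a detail: as you correctly note, $H(t)$ is the communication graph of a formal matrix product that does \emph{not} govern the delayed evolution, so nothing in your proposal converts complete reducibility of $H(t)$ into a statement about the true dynamics; the entire delayed case is conditional on a lemma you have neither stated precisely nor proved, and whose correct formulation (window length, weight, and what exactly ``influence'' means under non-FIFO delayed reads) is unclear.

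The paper's proof shows that no such lemma is needed, because the inference runs in the opposite direction. It keeps the whole machinery of Theorem~\ref{thm:decent} for the delayed system --- the $\Delta N\times\Delta N$ matrices $A^{\Delta}(t)$, the products $P(t)$, the positivity sets $S_j^{\Delta}(t)$, $S_j(t)$, and the minimal positive entries $\pi_j(t)$ --- and invokes $\Diamond$D2 at exactly one point: if $S_j^{\Delta}(t)=\cdots=S_j^{\Delta}(t+\Phi)$, then repeated application of Lemma~\ref{lem:incoming} shows the common set $S=S_j(t)$ has no incoming edge in any of $G(t+1),\ldots,G(t+\Phi)$, hence none in $H(t+1)$ (any edge of the product graph entering $S$ would have to cross into $S$ at some individual step); $\Diamond$D2 then forbids outgoing edges of $S$ in $H(t+1)$, and the self-loops of A2 push this back down to the individual graphs, since an outgoing edge of $S$ in a single $G(t+\varphi)$, padded with self-loops, would be an outgoing edge of $S$ in $H(t+1)$. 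So $S$ has no outgoing edge in each $G(t+1),\ldots,G(t+\Phi)$, Lemma~\ref{lem:outgoing} applies step by step, $\pi_j$ does not decrease across the window, and the bookkeeping of Theorem~\ref{thm:decent} yields $\lVert A^{\Delta}(\theta)\cdots A^{\Delta}(t_0)\lVert_{\bot}\leq 1-N\alpha^{\Phi\Delta N}$, after which sub-multiplicativity and Proposition~\ref{pro:convergence} finish the proof. In short, $H(t)$ is used only as a combinatorial summary of the individual communication graphs, never as a carrier of influence in the delayed dynamics; the elementary two-way bridge between stepwise edges and product edges replaces your realization lemma entirely, and it is the idea missing from your proposal.
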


Beside the  above variants of conditions C and D (derived from~C and~D by a weakening
	procedure standard in temporal logic), a 
	close inspection of the proof of Theorem~\ref{thm:decent} leads us to introduce
	another weakening of condition~D.
Indeed, 
	our proof shows that the agreement algorithm achieves asymptotic 
	consensus when replacing condition D by the following weaker condition.
\begin{description}
	\item[ D* :] There is some $j \in [N]$ such that at every time $t\in\IN$, 
	\begin{enumerate}
		\item the directed graph $([N]\,, \cup_{s\geq t}E(s))$ is $j$-oriented;
		\item the connected component of $j$  in the communication graph $G(t)$ is $j$-oriented, and 
		every other connected component of~$G(t)$  is strongly connected.
\end{enumerate}
\end{description}

\begin{theorem}\label{thm:decent*}
Under assumptions A and B,	the agreement algorithm guarantees asymptotic consensus
	when condition D* holds.
\end{theorem}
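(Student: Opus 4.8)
The plan is to revisit the convergence proof of Theorem~\ref{thm:decent} and to verify that each use of conditions D1 and D2 in fact only requires the weaker reachability and reducibility encoded in D*. Concretely, I would first fix the standard monotone quantities attached to the $\Delta$-window imposed by assumption B3. For $t \ge \Delta$ put
$$M(t) = \max_{i\in[N]}\ \max_{t-\Delta+1\le s\le t} x_i(s), \qquad m(t) = \min_{i\in[N]}\ \min_{t-\Delta+1\le s\le t} x_i(s).$$
Assumptions A1 (stochasticity), A2/B2 (self-loops) and B1/B3 (received values lie in the current window) show that every updated value $x_i(t+1)$ is a convex combination of values lying in $[m(t),M(t)]$; hence $M$ is non-increasing, $m$ is non-decreasing, and both converge, say to $M^*$ and $m^*$ with $m^*\le M^*$. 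Since $m(t)\le x_i(t)\le M(t)$, asymptotic consensus is equivalent to $\delta^* := M^*-m^*=0$, so I would argue by contradiction and assume $\delta^*>0$.

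The shared core of all the convergence theorems is a propagation estimate fed by assumption A3: if at some time an agent holds a value within $\theta\delta^*$ of $m^*$ and a second agent hears it (an edge of $E$ pointing at the first agent), then after one step the second agent's value is within $\bigl(1-\alpha(1-\theta)\bigr)\delta^*$ of $m^*$, and symmetrically near $M^*$; self-loops together with the bounded delay keep such an extreme witness available inside the window while it is relayed. Condition D* is exactly what lets me chain these one-step estimates through the distinguished node $j$. Part~2 guarantees that at every single time the non-$j$ connected components of $G(t)$ are strongly connected, so they mix internally just as under D2, while the $j$-orientation of $j$'s own component ensures that the value of $j$ reaches every agent of that component; part~1 then plays the role of D1, guaranteeing that over a finite horizon the value held by $j$ is relayed, with a positive weight, into the state of every agent. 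Routing all the propagation through the single relay $j$ is what replaces the ``everyone influences everyone'' mechanism of the strongly connected decentralized model: two arbitrary agents are pulled toward a common past value of $j$, so the spread contracts.

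The main obstacle is that D* only posits orientation of the union graph $([N],\cup_{s\ge t}E(s))$, not of any fixed-length product, so the horizon needed for $j$ to reach everyone is finite but not uniformly bounded; the resulting contraction factor has the form $1-\alpha^{k}$ with $k$ growing with the elapsed time, and a naive iteration of factors tending to $1$ does not by itself force $\delta^*=0$. This is precisely the point already handled in the proof of Theorem~\ref{thm:decent}, and I would reuse that argument unchanged: rather than iterating a multiplicative bound, one fixes a late time at which $M(t)\le M^*+\epsilon$ and $m(t)\ge m^*-\epsilon$, uses the finiteness of the $j$-orientation horizon to drive the windowed maximum strictly below $M^*$ (or the windowed minimum strictly above $m^*$) by an amount bounded below independently of $\epsilon$ when $\delta^*>0$, and reads off a contradiction with the monotone limits $M(t)\ge M^*$ and $m(t)\le m^*$. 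The only thing to check for the present statement --- and the substance of ``a close inspection of the proof'' --- is that every invocation of D1 and D2 in that argument uses no more than the reachability to and from $j$ and the internal strong connectivity of the remaining components furnished by D*(1) and D*(2); granting this, the conclusion of Theorem~\ref{thm:decent} transfers verbatim.
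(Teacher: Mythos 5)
Your proposal has the right high-level intuition (route all influence through the distinguished agent $j$; a single ``positive column'' of influence suffices to contract the spread), and you correctly identify the central difficulty: under D* the horizon over which $j$ reaches every agent is finite but not uniformly bounded, so naive propagation gives contraction factors $1-\alpha^{K}$ with $K$ growing, which proves nothing. But your proposed resolution of that difficulty is exactly where the gap lies, and it is circular as stated. You fix $\epsilon$, then choose a late time $t_1$, and only then does D*(1) hand you a finite horizon $K=K(t_1)$; the decrease you can extract from propagation is of order $\alpha^{K}\delta^*$, which depends on $t_1$ and hence on $\epsilon$, so the claim that the windowed maximum drops ``by an amount bounded below independently of $\epsilon$'' is precisely the unproven assertion, not a consequence of self-loops, A3, and finiteness of the horizon. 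The mechanism you invoke --- ``self-loops together with the bounded delay keep such an extreme witness available inside the window while it is relayed'' --- only guarantees the witness's weight decays by a factor $\alpha$ per step, and over an unbounded waiting period that weight decays to nothing.

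What is missing is the actual role of D*(2) in the paper's argument, which is not ``internal mixing'' of the non-$j$ components but a \emph{weight-preservation} property during the waiting periods. The paper first proves (Proposition~\ref{pro:Pj}, via the condensation graph, Lemmas~\ref{lem:condensP}--\ref{lem:condensor}) that D*(2) is equivalent to property P$_{j}$: no subset containing $j$ has an outgoing edge but no incoming edge. It then tracks the support $S_j^{\Delta}(t)$ of the $\Delta j$-th column of the product $P(t)=A^{\Delta}(t)\cdots A^{\Delta}(t_0)$: this set is non-decreasing (Lemma~\ref{lem:SDeltaincrease}), so it can strictly grow at most $\Delta N$ times; at each growth step the minimum positive entry $\pi_j$ shrinks by at most $\alpha$ (Lemma~\ref{lem:phi}); and --- this is the crux --- whenever the support is stationary, Lemma~\ref{lem:incoming} shows $S_j(t)$ has no incoming edge in $G(t+1)$, P$_{j}$ then forces it to have no outgoing edge, and Lemma~\ref{lem:outgoing} shows $\pi_j$ does not decrease at all. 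Hence $\pi_j\geq\alpha^{\Delta N}$ uniformly in the (unbounded) horizon, giving the uniform contraction $\lVert P(\theta_0)\lVert_{\bot}\leq 1-\alpha^{\Delta N}$ by Corollary~\ref{cor:boundnorm}, from which convergence follows by sub-multiplicativity --- no contradiction argument and no $\epsilon$ bookkeeping needed. Without an analogue of this stationarity/no-outgoing-edge mechanism (or of the graph-theoretic equivalence P$_{j}\Leftrightarrow$ D*(2) that feeds it), your outline cannot be completed as written.
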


Contrary to the coordinators in condition C, the agent~$j$ in condition D2* is fixed 
	in time.
Indeed, the combination of conditions C and D2* defines a simple model of a {\em steady 
	coordinator}. 
One point of interest in Theorem~\ref{thm:decent*} is that it demonstrates that 
	asymptotic consensus can  be achieved in a hybrid model: agents can
	be disconnected from the coordinator provided  they are still
	clustered into independent and strongly connected groups.

\subsection{Related work}

Numerous convergence results for the agreement algorithm  have been established in the literature.

Presumely, the first one, which  is a straightforward corollary of the classical Frobenius' theorem~\cite{Fro12},
	concerns the case of a {\em fixed ergodic} matrix  in the synchronous setting.
Wolfowitz's theorem~\cite{Wol63} extends this result to the  sequences of varying matrices 
	taken from a finite set of ergodic matrices such that any finite product of matrices
	in that set is ergodic.
We refer the reader to~\cite{Sen73} for historical references and variants
	of these theorems.
	
Bertsekas and Tsitsiklis~\cite{BT89} introduced the set of assumptions A and B to relax the 
	finiteness hypothesis on the set of matrices and to handle communication delays.
Moreover they defined a condition on the sequence of communication
	graphs, the condition of  $\Phi$-{\em bounded intercommunication intervals}, where 
	$\Phi$ denotes a  positive integer: if $(i,j)$ is an edge of the 
	communication graph infinitely often, then  $(i,j)$ is required to be
	an edge of the communication graph at least once during each period of duration $\Phi$.
Tsitsiklis~\cite{Tsi84} proved that under assumptions A, B, D1 and on the condition of bounded 
	intercommunication intervals, the agreement algorithm guarantees asymptotic consensus.
It is easy to see that assumptions A2 and D1 combined with the condition 
	of $\Phi$-bounded intercommunication intervals imply that from  some time onward,
	any product of $\Phi N$ consecutive 
	matrices in the sequence $(A(t))_{t\in \IN}$ is a positive matrix.
Thereby condition $\Diamond$D holds and the convergence result in~\cite{Tsi84}  appears as a special 
 	case of our Theorem~\ref{thm:diamondD}.
  
Moreau~\cite{Mor05} and Hendrickx and Blondel~\cite{HB05} independently
	proved that in the synchronous case ($\Delta = 1$), asymptotic consensus is still guaranteed 
	when replacing the condition of bounded intercommunication intervals by a
	symmetry condition on the edges 
	of the communication graphs, namely $(i,j) \in E(t)$ iff $(j,i)\in E(t)$ for any $t\in\IN$.
The latter condition corresponds to a particular case of the decentralized model,
	and thus the convergence result in~\cite{Mor05,HB05}, as well as its extension 
	in~\cite{BHOT05} to the case of bounded 
	communication delays (assumption B), are contained in Theorem~\ref{thm:decent}.

Cao et al.~\cite{CSM05} proved that in the case of stochastic matrices with equal positive 
	entries in each row (usually referred to as the {\em equal neighbor model}\,), 
	and under assumptions A and B with $\Delta = 1$, the agreement 
	algorithm achieves asymptotic consensus when all the communication graphs are oriented.
Their convergence result thus coincides with Theorem~\ref{thm:coord}
	in the particular case of a synchronous multiagent system and with the equal neighbor model.

After writing the proof of Theorem~\ref{thm:coord}, we became aware of
	two recent papers both containing Theorem~\ref{thm:decent} in the synchronous case.
In~\cite{HT11}, Hendrickx and Tsitsiklis showed that the agreement algorithm achieves asymptotic consensus
 	under assumptions A, B with $\Delta =1$,  D1
 	and the so-called {\em cut-balance} condition.
In light of Proposition~\ref{pro:Pj} below, the latter condition
	turns out to correspond to the decentralized model.
Independently, Touri and Nedi\'c~\cite{TN11} established a general convergence result for the infinite
	product of random stochastic matrices, and to do that, they first proved that this 
	result holds in the deterministic case; see Lemma~5 in~\cite{TN11}.
It is easy to see that this lemma actually coincides with our Theorem~\ref{thm:decent}
	in the particular case of a synchronous multiagent system.
In fact, our technique for proving the lemmas in Sections~\ref{sec:column} and~\ref{sec:decent} {\it infra}
 	specialized to $\Delta = 1$, is similar 
	to the one used for the proof of the
	deterministic result in~\cite{TN11}.

\section{A seminorm for multiagent dynamics}\label{sec:seminorm}  

In this section, we discuss some auxiliary convergence results that will enter enter in
	our proofs in Section~\ref{sec:proofs}.
To illustrate their usefulness for the study of convergence of product of stochastic matrices,
	we also present a short proof of Wolfowitz's classical theorem based on them.
	
\subsection{An operator seminorm}

For any integer $N\geq 2$, we consider the seminorm on $\IR^N$ defined as the difference between the maximum and
	minimum entry of vector $x$
	$$\lVert x\lVert_{\bot} = \max_{i} (x_i) - \min_{i} (x_i) \enspace.$$
Thus $\lVert x\lVert_{\bot}$ is null if and only if $x\in \IR  \one$, where $\one$
	is the vector whose components are all equal to 1.
For any square matrix~$A$ with $\one$ as an eigenvector, the induced matrix seminorm $\lVert A\lVert_{\bot}$ 
	is $$ \lVert A \lVert_{\bot} = \sup _{x\notin \IR \one} 
		\frac{\lVert A x \lVert_{\bot}}{\lVert x\lVert_{\bot}} \enspace.$$
One key property of the seminorm $\lVert \cdot \lVert_{\bot}$ is that
			it is sub-multiplicative, i.e., 
			$$ \lVert AB \lVert_{\bot} \leq \lVert A \lVert_{\bot} \lVert B \lVert_{\bot} \enspace.$$
Another one is about the vectors that {\em realize} $\lVert A \lVert_{\bot}$.

\begin{lemma}\label{lem:realseminorm}
Let~$A$ be a square matrix with $\one$ as an eigenvector, 
	and let $\{e_i \, \mid\, i\in [N]\}$ denote the standard basis of $\IR^N$. 
There exists a nonempty subset $I$ of $[N]$ such that the vector $e_I = \sum_{i\in I} e_i$
	realizes $\lVert A \lVert_{\bot}$, i.e., 
	$$\lVert A \lVert_{\bot} = \lVert A \big ( e_I \big ) \lVert_{\bot} \enspace.$$
\end{lemma}	

\begin{proof}
By linearity, we have
	$$ \lVert A \lVert_{\bot} = \sup _{\lVert x\lVert_{\bot} = 1} 
					\lVert A(x)\lVert_{\bot} \enspace.$$
Let $x$ be a vector such that $\lVert x\lVert_{\bot} = 1$, and let $\overline{x}= x- x_{i_0} \one$ where
	$x_{i_0} = \min_i(x_i)$.
Then $\lVert \overline{x}\lVert_{\bot} = 1$, and $\overline{x}_{i_0} = \min_i (\overline{x}_i) = 0$.
Since $A(\one) = \one$,
 	$$\lVert A \lVert_{\bot} = \sup \left \{ \lVert A x \lVert_{\bot}  \, \mid \, 
	x\in\IR^N  \mbox{ with } \max_i (x_i)=1 \mbox{ and } \min_i (x_i)= 0 \right \} \enspace.$$
Moreover by compactness, there exists a vector $x$ with $\max_i (x_i)=1$ and $\min_i (x_i)= 0$,
	and such that $\lVert A \lVert_{\bot} = \lVert A x \lVert_{\bot}$.
	
Now suppose that some entries of $x$ are not in $\{0,1\}$, and let $x_j\in]0,1[$ be one of them.
We consider the two vectors $x^- = x- x_j \cdot e_j$ and $x^+ = x- (1-x_j) \cdot  e_j$, 
	and we denote $y = A x $, $y^-= A x^-$, and $y^+ = A x^+$.
Then for any index $i$, we have $y^-_i = y_i - x_j A_{i j}$,
	$y^+_i = y_i + (1- x_j) A_{i j}$, and so 
	$y^-_i \leq y_i \leq y^+_i$.
Since $ \lVert x \lVert_{\bot} = \lVert x^-\lVert_{\bot} = \lVert x^+\lVert_{\bot} = 1$, 
	$\lVert y^- \lVert_{\bot}$
	and $\lVert y^+\lVert_{\bot} $ are both less or equal to $\lVert A \lVert_{\bot}$.
Let $i_0$ and $i_1$ be two indices such that $\lVert y\lVert_{\bot} = y_{i_1} - y_{i_0}$.
Then $$ y^-_{i_1} - y^-_{i_0} = \lVert A \lVert_{\bot} - x_j (A_{i_1 j} - A_{i_0 j}) \mbox{ and }
	y^+_{i_1} - y^+_{i_0} = \lVert A \lVert_{\bot} + (1- x_j) (A_{i_1 j} - A_{i_0 j}) \enspace.$$
From $x_j \in ]0,1[$, $y^-_{i_1} - y^-_{i_0} \leq \lVert A \lVert_{\bot}$, and  
	$y^+_{i_1} - y^+_{i_0} \leq \lVert A \lVert_{\bot}$, we derive that 
	$A_{i_1 j} = A_{i_0 j}$, and so $\lVert y^-\lVert_{\bot} = \lVert y^+\lVert_{\bot} = \lVert A \lVert_{\bot}$. 
One by one, we thus eliminate all the entries of $x$ different from 0 and 1, 
	and obtain a vector of the desired form.
\end{proof}
Observe that if $e_I = \sum_{i\in I} e_i$ realizes $\lVert A \lVert_{\bot}$, then 
	so does $e_{\overline{I}}$, where $\overline{I}$ denotes the complement of
	$I$ within $[N]$.

The vector $\one$ is an eigenvector of each stochastic matrix, and we easily check	
	that for any stochastic matrix~$A$ and any vector $x \in \IR^N$,
	$$ \lVert A x\lVert_{\bot} \leq \lVert x\lVert_{\bot} \enspace.$$
It follows that  the induced matrix seminorm of a stochastic matrix 
	 is less or equal to 1.
	
Interestingly we can compare $\lVert A \lVert_{\bot}$ with the {\em coefficients of ergodicity} of $A$ 
	previously introduced in the literature~\cite{Haj58, Wol63}, namely
	$$\delta(A) = \max_{j} \max_{i_1,i_2} |A_{i_2 j} -A_{i_1 j}|\enspace,$$
	and
	$$\lambda(A) = 1- \min_{i_1,i_2} \sum^N_{j=1} \min (A_{i_1 j}, A_{i_2 j}) \enspace.$$

\begin{proposition}\label{prop:deltalambda}
Let $A$ be a stochastic matrix.
Then
	$$ \delta(A) \leq  \lVert A \lVert_{\bot} \leq \lambda(A)\enspace.$$
Moreover $\lambda(A) =1$ if and only if $\lVert A \lVert_{\bot} =1$.
\end{proposition}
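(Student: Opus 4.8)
The plan is to prove the two inequalities separately and then read off the equivalence.

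For the lower bound $\delta(A) \leq \lVert A \lVert_{\bot}$, I would simply test the seminorm on the standard basis vectors. Since $N \geq 2$, each $e_j$ has $\lVert e_j \lVert_{\bot} = 1$, and $A e_j$ is the $j$-th column of $A$, so $\lVert A e_j \lVert_{\bot} = \max_i A_{ij} - \min_i A_{ij} = \max_{i_1,i_2} |A_{i_2 j} - A_{i_1 j}|$. Taking the supremum over $j$ in the definition of $\lVert A \lVert_{\bot}$ yields $\lVert A \lVert_{\bot} \geq \delta(A)$ at once.

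For the upper bound $\lVert A \lVert_{\bot} \leq \lambda(A)$, I would invoke Lemma~\ref{lem:realseminorm} to obtain a nonempty $I \subseteq [N]$ with $\lVert A \lVert_{\bot} = \lVert A(e_I) \lVert_{\bot}$. Setting $y = A e_I$, so $y_i = \sum_{j\in I} A_{ij}$, and picking indices $i_1, i_0$ that attain the maximum and minimum entries of $y$, I get $\lVert A \lVert_{\bot} = \sum_{j\in I}(A_{i_1 j} - A_{i_0 j})$. I would then bound this by passing to positive parts over all $j$, namely $\sum_{j\in I}(A_{i_1 j} - A_{i_0 j}) \leq \sum_{j}\max(A_{i_1 j} - A_{i_0 j},\,0)$, and use the elementary identity $\min(a,b) = a - \max(a-b,0)$, which, summed over $j$ and using that both rows of the stochastic matrix sum to $1$, gives $\sum_j \max(A_{i_1 j} - A_{i_0 j},\,0) = 1 - \sum_j \min(A_{i_1 j}, A_{i_0 j})$. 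Since the pair $(i_1,i_0)$ is one particular pair, the right-hand side is at most $\lambda(A)$ by definition, which completes the bound.

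For the equivalence, one direction is immediate from the sandwich: if $\lVert A \lVert_{\bot} = 1$ then $\lambda(A) \geq 1$, and as $\lambda(A) \leq 1$ always (the minimized sum is nonnegative), $\lambda(A) = 1$. For the converse I would argue by constructing an explicit test vector. If $\lambda(A) = 1$, there exist rows $i_1, i_2$ with $\sum_j \min(A_{i_1 j}, A_{i_2 j}) = 0$, so their supports $S_1, S_2$ are disjoint; both are nonempty (each row is stochastic) and $S_1 \neq [N]$ (it is disjoint from the nonempty $S_2$), whence $\lVert e_{S_1} \lVert_{\bot} = 1$. Testing on $x = e_{S_1}$ gives $(Ax)_{i_1} = \sum_{j\in S_1} A_{i_1 j} = 1$ and $(Ax)_{i_2} = 0$, so $\lVert Ax \lVert_{\bot} \geq 1$; since $\lVert Ax \lVert_{\bot} \leq \lVert x \lVert_{\bot} = 1$ for a stochastic matrix, equality holds and $\lVert A \lVert_{\bot} = 1$.

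The main obstacle is the upper bound: the crucial insight is that restricting the column-difference sum to the realizing set $I$ and then enlarging it to the positive part over all $j$ is exactly what links $\lVert A \lVert_{\bot}$ to a single pair of rows, and that the total-variation identity $\sum_j \max(A_{i_1 j}-A_{i_0 j},0) = 1 - \sum_j \min(A_{i_1 j},A_{i_0 j})$ converts that quantity into the form defining $\lambda(A)$. The lower bound and the equivalence are comparatively routine once the correct test vectors $e_j$ and $e_{S_1}$ have been identified.
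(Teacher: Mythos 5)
Your proof is correct and takes essentially the same route as the paper's: the lower bound by testing on the standard basis vectors, the upper bound via the realizer $e_I$ of Lemma~\ref{lem:realseminorm} together with the two extremal rows and stochasticity, and the equivalence by using the support of one of the two rows witnessing $\lambda(A)=1$ as a test set. The only difference is cosmetic: you route the upper bound through the identity $\sum_j \max(A_{i_1 j}-A_{i_0 j},0)=1-\sum_j\min(A_{i_1 j},A_{i_0 j})$, whereas the paper computes $1-\lVert A\lVert_{\bot}=\sum_{j\notin I}A_{\alpha j}+\sum_{j\in I}A_{\beta j}$ directly.
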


\begin{proof}
First we observe that $$ \delta(A) = \max_{j=1,\cdots,N} \lVert A e_j \lVert_{\bot}\enspace,$$
	and the inequality $ \delta(A) \leq  \lVert A \lVert_{\bot}$ immediately follows.

For the second inequality, we consider a realizer of $\lVert A \lVert_{\bot}$ that we
	denote $e_I = \sum_{i \in I} e_i$ (cf. Lemma~\ref{lem:realseminorm}).
Let $f= A e_I$, $f_{\alpha} = \max_i f_i$, and $f_{\beta} = \min_i f_i$.
Then we have $$ \lVert A \lVert_{\bot} = \sum_{j\in I} (A_{\alpha j} - A_{\beta j})	\enspace.$$
Since $A$ is a stochastic matrix, we get
	$$1-\lVert A \lVert_{\bot} = \sum_{j=1}^N  A_{\alpha j} - \sum_{j\in I} (A_{\alpha j} - A_{\beta j})	
	                           = \sum_{j\notin I} A_{\alpha j} +  \sum_{j\in I} A_{\beta j}\enspace.$$
Hence $$1-\lVert A \lVert_{\bot} \geq \sum_{j=1}^N \min (A_{\alpha j} , A_{\beta j})\enspace,$$ and the 
	inequality $\lVert A \lVert_{\bot} \leq \lambda(A)$ immediately follows.
	
Suppose now that $\lambda(A) =1$, i.e., there exist two indices $i_1, i_2$ such that 
	for each $j\in [N]$, either $A_{i_1j}=0$ or $ A_{i_2j}=0$.
Let $I=\{j\in [N] \mid  A_{i_1j} \neq 0\}$, $e_I = \sum_{i \in I} e_i$,
 	and $f= A e_I$.
Since $A$ is stochastic,  its $i_1$-th and $i_2$-th rows each contains a non-null entry,
 	and thus neither $I$ or its complement  is empty.
Hence $$ f_{i_1} = 1 \mbox{ and } f_{i_2} = 0 \enspace,$$
	which shows  $\lVert A \lVert_{\bot} =1$.
\end{proof}

We now give a corollary that is useful for convergence proofs.

\begin{corollary}\label{cor:boundnorm}
Let $A$ be a $ N \times N$ stochastic matrix, $j$ be any index in $[N]$,
	and let $\beta_j$ be the minimum of all 
	the entries in the  $j$-th column, i.e.,
	$ \beta_j = \min \{ A_{i,  j}  \mid   i \in  [N]  \}$.
Then, $$\lVert A \lVert_{\bot} \leq 1- \sum_{j=1}^N  \beta_j \enspace.$$	
In particular, if all the entries of one column of~$A$ are positive,
	then $\lVert A \lVert_{\bot} < 1$.
\end{corollary}

\begin{proof}
Since all $A$'s entries are non-negative, we have
	$$\lambda(A) \leq  1- \sum_{j=1}^N  \beta_j \enspace.$$
Using Proposition~\ref{prop:deltalambda}, we derive that $\lVert A \lVert_{\bot}\, \leq 1- \beta_j$.
In the case $\beta_j >0$ for some index~$j$, we obtain $\lVert A \lVert_{\bot} < 1$.	
	\end{proof}

\subsection{A simple criterion for convergence}\label{subsec:criterion}

Now we give a seminorm based condition on a sequence of stochastic matrices under 
	which their product converges to a rank one stochastic matrix. 
This criterion lies implicitly behind several convergence proofs 
	(for instance, see~\cite[Section~7.3]{BT89} or \cite{BHOT05,TN11} and also
	\cite{GQ13} for related results concerning Markov operators on cones).

\begin{proposition} \label{pro:convergence}
For each integer $t\in \IN$, let $A(t)$ be a stochastic matrix, and 
	let $P(t) = A(t)\dots A(0)$.
The following conditions are equivalent:
\begin{enumerate}
	\item The sequence $\big ( P(t)\big )_{t\in \IN}$ converges  to a  matrix of the form 
	$\one \pi^T$ where $\pi$ is a probability vector in $\IR^N$.

	\item The sequence $\big (  \lVert P(t) \lVert_{\bot}\big )_{t\in \IN}$ converges  to 0.
	
	\item For each vector $v\in \IR^N$, the sequence $\big ( P(t)v\big )_{t\in \IN}$ converges  
	to some vector in the line $\IR \one$.

	\item For each vector $v\in \IR^N$, the sequence $\big (  \lVert P(t) v\lVert_{\bot}\big )_{t\in \IN}$ 
	converges to 0.
	
	\end{enumerate} 
\end{proposition}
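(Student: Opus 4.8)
The plan is to prove the equivalence of the four conditions in Proposition~\ref{pro:convergence} by establishing a cycle of implications, exploiting the seminorm machinery already developed. First I would observe that the key quantitative fact is that each $P(t)$ is stochastic (as a product of stochastic matrices), so $\lVert P(t)\lVert_{\bot}\le 1$ for all $t$, and moreover the sequence $\big(\lVert P(t)\lVert_{\bot}\big)_{t\in\IN}$ is \emph{non-increasing}: since $P(t+1)=A(t+1)P(t)$ and the seminorm is sub-multiplicative with $\lVert A(t+1)\lVert_{\bot}\le 1$, we get $\lVert P(t+1)\lVert_{\bot}\le\lVert A(t+1)\lVert_{\bot}\,\lVert P(t)\lVert_{\bot}\le\lVert P(t)\lVert_{\bot}$. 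This monotonicity is what makes the seminorm the right tool, and it underlies the passage between the matrix-level and vector-level statements.

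The cleanest route is to show $(1)\Rightarrow(3)\Rightarrow(4)\Rightarrow(2)\Rightarrow(1)$. For $(1)\Rightarrow(3)$, if $P(t)\to\one\pi^T$ then for any $v$ we have $P(t)v\to\one(\pi^Tv)=(\pi^Tv)\,\one\in\IR\one$, which is immediate by continuity of matrix-vector multiplication. For $(3)\Rightarrow(4)$, if $P(t)v$ converges to a vector in $\IR\one$, then since $\lVert\cdot\lVert_{\bot}$ is continuous and vanishes exactly on $\IR\one$, we get $\lVert P(t)v\lVert_{\bot}\to 0$. The implication $(4)\Rightarrow(2)$ is the quantitative heart: here I would invoke Lemma~\ref{lem:realseminorm} to pick, for each $t$, a realizer $e_{I_t}$ with $\lVert P(t)\lVert_{\bot}=\lVert P(t)e_{I_t}\lVert_{\bot}$. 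Because there are only finitely many subsets $I\subseteq[N]$, some fixed subset $I$ realizes the seminorm infinitely often; applying the hypothesis $(4)$ to the single vector $v=e_I$ forces $\lVert P(t)e_I\lVert_{\bot}\to 0$ along that subsequence, hence $\lVert P(t)\lVert_{\bot}\to 0$ along it, and monotonicity of the whole sequence upgrades this to convergence of the full sequence to $0$.

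The remaining and most delicate step is $(2)\Rightarrow(1)$, which must actually produce the limiting matrix $\one\pi^T$. The idea is that $\lVert P(t)\lVert_{\bot}\to 0$ means each column $P(t)e_j$ becomes asymptotically constant (all its entries collapse to a common value), so the whole matrix flattens into a rank-one form. To make this rigorous I would show the sequence of columns is Cauchy: using $P(t+s)=A(t+s)\cdots A(t+1)P(t)$ together with the fact that each row of $P(t+s)$ is a convex combination (weighted average, by stochasticity) of the rows of $P(t)$, every entry of column $j$ at time $t+s$ lies between $\min_i P(t)_{i,j}$ and $\max_i P(t)_{i,j}$, and these bounds pinch together as $\lVert P(t)\lVert_{\bot}\to 0$. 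Hence $\big(P(t)e_j\big)_t$ converges to some $\pi_j\,\one$, and assembling the columns gives $P(t)\to\one\pi^T$ with $\pi=(\pi_1,\dots,\pi_N)$. Finally $\pi$ is a probability vector because each $P(t)$ is stochastic, so its columns sum entrywise to $\one$ in the sense that each row sums to $1$; passing to the limit, the limiting rows $\pi^T$ sum to $1$ and are non-negative, so $\pi$ is a probability vector. I expect this monotone-pinching argument for the column limits to be the main obstacle, precisely because it is where the abstract seminorm convergence has to be converted into genuine convergence of matrix entries, and it is the step where stochasticity (rather than just $\one$ being an eigenvector) is essential.
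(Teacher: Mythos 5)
Your proof is correct, but it closes the equivalence cycle in a different direction from the paper and uses one genuinely different ingredient. The paper dismisses $(1)\Rightarrow(2)$, $(1)\Rightarrow(3)$, $(3)\Rightarrow(4)$, $(2)\Rightarrow(4)$ as obvious and does the real work in $(4)\Rightarrow(3)$ and $(3)\Rightarrow(1)$: for $(4)\Rightarrow(3)$ it applies exactly your monotone pinching argument, but to $P(t)v$ for an \emph{arbitrary} vector $v$ (stochasticity makes $\max_i\big(P(t)v\big)_i$ non-increasing and $\min_i\big(P(t)v\big)_i$ non-decreasing, so the oscillation tending to $0$ forces all entries to a common limit), and $(3)\Rightarrow(1)$ then just specializes $v=e_j$ and invokes closedness of the set of stochastic matrices, as you do. Your route $(1)\Rightarrow(3)\Rightarrow(4)\Rightarrow(2)\Rightarrow(1)$ instead makes $(4)\Rightarrow(2)$ a load-bearing step, and your argument for it --- pick a realizer $e_{I_t}$ via Lemma~\ref{lem:realseminorm}, pigeonhole on the finitely many subsets of $[N]$ to get a recurring realizer $e_I$, apply hypothesis $(4)$ to that single vector, and upgrade subsequential convergence of $\lVert P(t)\lVert_{\bot}$ to full convergence by its monotonicity --- is an ingredient the paper never needs, since in its cycle condition $(2)$ falls out of $(1)$ for free. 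What your route buys is a direct quantitative bridge from vector-level to operator-level convergence, and a nice illustration of what Lemma~\ref{lem:realseminorm} is actually for (the paper only uses it to prove Proposition~\ref{prop:deltalambda}); what the paper's route buys is economy, since the pinching argument is its only analytic input and is used once. Your $(2)\Rightarrow(1)$ is essentially the paper's $(4)\Rightarrow(3)$ and $(3)\Rightarrow(1)$ fused together and restricted to the columns $P(t)e_j$, so the analytic core of the two proofs coincides; only the combinatorial detour through the realizers is extra.
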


\begin{proof}
The implications $(1) \Rightarrow (2)$,  $(3) \Rightarrow (4)$, $(1) \Rightarrow (3)$,
	and $(2) \Rightarrow (4)$ are obvious.
	
To show that $(4)\Rightarrow (3)$, we consider a sequence of vectors $x(t) = P(t)v$ 
	where $v$ is some vector in $\IR^N$. 
We denote 
	$$M(t)  =  \max_i \big( x_i(t)\big )  \mbox{ and } m(t) =  \min_i \big( x_i(t)\big )\enspace.$$
Hence $\lVert x(t) \lVert_{\bot} = M(t) - m(t) $, and  $(4)$ is equivalent to
	$\lim_{t \rightarrow \infty}  M(t) - m(t) = 0$.
Since each matrix $A(t)$ is stochastic, the sequences $\big(M(t)\big)_{t\in \IN}$ and $\big(m(t)\big)_{t\in \IN}$
	are non-increasing and non-decreasing, respectively.
It follows that the latter two sequences as well as all the sequences $\big( x_i(t)\big )_{t\in \IN}$ are 
	convergent to the same limit, which shows (3).

For the implication $(3) \Rightarrow (1)$, suppose that (3) holds.
In particular with $v=e_j$,  each sequence $\big( P_{ij}(t)\big )_{t\in \IN}$ converges to 
	some scalar $
	j$ independent of index~$i$.
Therefore, the sequence $\big ( P(t)\big )_{t\in \IN}$ converges, and 
	$\lim_{t\rightarrow +\infty} P(t) = \one \pi^T$.
Since each matrix $P(t)$ is stochastic and the set of stochastic matrices is 
	closed, $\pi$ is a probability vector.
\end{proof}

As an immediate consequence of Corollary~\ref{cor:boundnorm}, the sub-multiplicativity
 	of the seminorm $\lVert \cdot \lVert_{\bot}$, and Proposition~\ref{pro:convergence},
	we obtain the well-known result that for any ergodic stochastic matrix $A$,  
	$\lim_{t \rightarrow \infty} A^t$ exists and is a rank one stochastic matrix.
In fact, we can even derive Wolfowitz's theorem~\cite{Wol63} which generalizes 
	the latter result to infinite products of matrices taken from a finite set 
	of ergodic stochastic matrices.

\begin{theorem}[Wolfowitz]
Let ${\cal M}$ be a nonempty finite set of stochastic matrices such that any finite
	product of matrices in this set is ergodic.
For each $t\in \IN$, let $A(t)$ be a matrix in ${\cal M}$. 
Then $\lim_{t\rightarrow +\infty} A(t) \cdots A(0)$ exists, and the limit
	 is of the form $\one \pi^T$ where $\pi$ is a probability vector in $\IR^N$.
\end{theorem}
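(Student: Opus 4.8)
The plan is to route everything through the seminorm criterion of Proposition~\ref{pro:convergence}. Writing $P(t) = A(t)\cdots A(0)$, it suffices to prove $\lVert P(t)\rVert_{\bot}\to 0$, since the implication $(2)\Rightarrow(1)$ of that proposition then yields a limit of the form $\one\pi^T$ with $\pi$ a probability vector. To obtain the decay I would first establish a \emph{uniform contraction lemma}: there exist an integer $L\ge 1$ and a constant $c<1$ such that $\lVert B\rVert_{\bot}\le c$ for every product $B$ of exactly $L$ matrices taken from $\mathcal{M}$. Granting this, I would split the $t+1$ factors of $P(t)$ into $\lfloor (t+1)/L\rfloor$ consecutive blocks of length $L$ plus a shorter remainder, and combine submultiplicativity of $\lVert\cdot\rVert_{\bot}$ with the bound $\lVert\cdot\rVert_{\bot}\le 1$ on the remainder to get $\lVert P(t)\rVert_{\bot}\le c^{\lfloor (t+1)/L\rfloor}\to 0$.

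Since $\mathcal{M}$ is finite there are only finitely many products of any fixed length, so the contraction lemma reduces to finding a single $L$ for which \emph{every} length-$L$ product has seminorm strictly below $1$; the constant $c$ is then the maximum of finitely many numbers $<1$. I would argue by contradiction. If for every $n$ some length-$n$ product had seminorm $1$, then the set of finite factor-sequences whose associated product has seminorm $1$ would be closed under taking prefixes: if $\lVert A_{i_n}\cdots A_{i_1}\rVert_{\bot}=1$, submultiplicativity forces $\lVert A_{i_m}\cdots A_{i_1}\rVert_{\bot}=1$ for $m<n$. This set is then an infinite, finitely branching tree, so by K\"onig's lemma it contains an infinite branch $A_{i_1},A_{i_2},\dots$ all of whose prefix products $B_n = A_{i_n}\cdots A_{i_1}$ satisfy $\lVert B_n\rVert_{\bot}=1$.

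The crux is to contradict the hypothesis that all finite products are ergodic. Because there are only finitely many zero/positivity patterns of $N\times N$ matrices, two prefixes $B_m$ and $B_n$ with $m<n$ must share the same pattern $Z$. Writing $D = A_{i_n}\cdots A_{i_{m+1}}$ we have $B_n = D\,B_m$, so at the level of Boolean patterns $\mathrm{pat}(D)\odot Z = Z$ (using $\mathrm{pat}(AB)=\mathrm{pat}(A)\odot\mathrm{pat}(B)$ for nonnegative matrices), and iterating gives $\mathrm{pat}(D^k B_m)=Z$ for every $k$. Since $\lVert B_m\rVert_{\bot}=1$, Proposition~\ref{prop:deltalambda} gives $\lambda(B_m)=1$, i.e. $Z$ has two rows with disjoint supports; as this depends only on $Z$, we get $\lambda(D^k B_m)=1$ and hence $\lVert D^k B_m\rVert_{\bot}=1$ for all $k$. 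On the other hand $D$ is a finite product, hence ergodic, so $D^k\to\one\pi^T$ and therefore $D^k B_m\to \one(\pi^T B_m)$, a rank-one stochastic matrix of zero seminorm, contradicting $\lVert D^k B_m\rVert_{\bot}=1$. This establishes the contraction lemma and with it the theorem. The main obstacle is exactly this last step: extracting, from the mere existence of a non-contracting infinite branch, a fixed finite ergodic block $D$ whose powers are forced to preserve a disjoint-support pattern; the pattern-repetition pigeonhole and the Boolean-product identity are what make the argument close.
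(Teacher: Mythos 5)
Your proof is correct, and its algebraic core coincides with the paper's: both arguments route the theorem through Proposition~\ref{pro:convergence}, both pigeonhole on the finitely many zero-patterns (communication graphs) of products, both use the fact, via Proposition~\ref{prop:deltalambda}, that having seminorm $1$ is a property of the pattern alone, and both derive the final contradiction from ergodicity of a repeated block together with Corollary~\ref{cor:boundnorm}. The difference is in how the uniform contraction lemma is assembled. The paper (Lemma~\ref{lem:wol}) works inside a single product of prescribed length: it pigeonholes the suffix products $A_{N^2}\cdots A_\ell$, finds $A$ and a nonempty block $B$ with $AB\sim A$, hence $AB^n\sim A$ for all $n$, and concludes directly that \emph{every} product of that fixed length has seminorm $<1$; finiteness of $\mathcal{M}$ then yields an explicit contraction factor and an explicit geometric convergence rate. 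You instead negate the lemma, invoke K\"onig's lemma to produce an infinite branch of seminorm-$1$ prefix products, and pigeonhole along that branch to extract the pattern-fixing ergodic block $D$. This is sound but non-constructive: your block length $L$ exists with no bound, hence no rate, whereas the paper's fixed length gives geometric decay with computable constants (incidentally, your ``finitely many patterns'' count is the careful one; the paper's ``at most $N^2$ equivalence classes'' should read $2^{N^2}$). Two small remarks on your ending: passing from $D^kB_m\to\one(\pi^TB_m)$ to a contradiction with $\lVert D^kB_m\rVert_{\bot}=1$ tacitly uses continuity of $\lVert\cdot\rVert_{\bot}$ (or of $\lambda$), which is true --- a finite seminorm on a finite-dimensional space is continuous --- but should be stated; and you can avoid limits altogether by using ergodicity in the paper's form $D^{n_0}>0$ for some $n_0$, which gives $1=\lVert D^{n_0}B_m\rVert_{\bot}\leq\lVert D^{n_0}\rVert_{\bot}\,\lVert B_m\rVert_{\bot}<1$ by Corollary~\ref{cor:boundnorm} and sub-multiplicativity, exactly the paper's finishing move.
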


\begin{proof}
We begin by mimicking the first steps of the proof in~\cite{Wol63} with the seminorm 
	$\lVert \cdot \lVert_{\bot}$ to be substituted for the coefficient of
 	ergodicity $\lambda$.
Then instead of Theorem~2 of~\cite{Haj58} used by Wolowitz, we just need the 
	sub-multiplicativity of the seminorm to conclude.
	
In more detail, we first show the following lemma.
\begin{lemma}\label{lem:wol}
The seminorm of any product of $N^2 + 1$ matrices in ${\cal M}$ is less than 1.	
\end{lemma} 
\begin{proof}
Let $\sim$ be the equivalence relation defined on the set of square stochastic 
	matrices by
	$A\sim B$ iff $A$ and $B$ have the same communication graph.
We easily check that $\sim$ is preserved by (right) multiplication
	with stochastic matrices.
Moreover,  the conditions $\lambda(A) = 1$ and $\lambda(B) = 1$ are clearly
	equivalent when $A\sim B$. 
Using Proposition~\ref{prop:deltalambda} twice, we derive that
	if $A\sim B$ and $\lVert A \lVert_{\bot} = 1$, then  $\lVert B \lVert_{\bot} =1$.

Let $A_0,\cdots, A_{N^2}$ be $N^2 +1$ matrices in ${\cal M}$.
Since there are at most $N^2$ equivalence classes under the relation~$\sim$, there exist 
	two indices $k,\ell$, $0\leq k < \ell\leq N^2$, such that
	$$ A_{N^2}\cdots A_{\ell} \sim A_{N^2}\cdots A_{k} \enspace.$$
Let $A= A_{N^2}\cdots A_{\ell}$ and $B= A_{\ell-1} \cdots A_k$;
	we have $AB \sim A$.
It follows that for any positive integer $n$, $AB^n \sim A$.
Moreover by assumption on ${\cal M}$, the matrix $B$ is ergodic, i.e., $B^{n_0} >0$
 	for some positive integer $n_0$.
	
Now suppose that $\lVert A \lVert_{\bot} = 1$.
The above argument shows that  $\lVert AB^{n_0} \lVert_{\bot} = 1$ 
	and by the sub-multiplicativity of~$\lVert \cdot \lVert_{\bot}$, 
	we get $\lVert B^{n_0} \lVert_{\bot} = 1$, a contradiction with 
	Corollary~\ref{cor:boundnorm}.
Therefore, $\lVert A \lVert_{\bot} < 1$.
Using the sub-multiplicativity of~$\lVert \cdot \lVert_{\bot}$ again, we obtain
	$\lVert A_{N^2}\cdots A_{0} \lVert_{\bot} < 1$ as required.
\end{proof}
Let $\delta$ denote the supremum of the seminorms of the matrices that  are
	a product of $N^2 + 1$ matrices in ${\cal M}$.
Since ${\cal M}$ is finite, there is a finite number of such matrices and
	Lemma~\ref{lem:wol} shows that $\delta <1$.
Using the sub-multiplicativity of the seminorm, the theorem immediately follows. 
\end{proof}

In the next section, we show how to use the criterion 
	in Proposition~\ref{pro:convergence}
	to prove convergence theorems where the finiteness assumption of the set ${\cal M}$
	is weakened by assuming a uniform positive lower bound on positive entries
	(assumption A3),
	and where the ergodicity assumption is replaced by conditions 
	that basically guarantee ``eventual positivity'' of some finite products of matrices
	in~${\cal M}$.

\section{Convergence with bounded delays and topological changes}\label{sec:proofs} 

This section is devoted to the proof of Theorems~\ref{thm:coord} and~\ref{thm:decent}.
In Section~\ref{sec:red}, we introduce a family of stochastic matrices $(A^{\Delta}(t))_{t\in\IN}$
	of size $\Delta N$, which allows us to express our original evolution equation~(\ref{eq:linear})
	as a ``zero delay'' evolution equation of the form $$X(t+1) = A^{\Delta}(t)X(t)$$
	on suitably defined vectors $X(t)$.
In Section~\ref{sec:column}, we investigate the time evolution
 	of the sets of positive entries in each column of the successive products of the matrices~$A^{\Delta}(t)$.
The content of this subsection constitutes the core of our proofs of Theorems~\ref{thm:coord} and~\ref{thm:decent}.
Combined with a simple combinatorial argument, it allows us to conclude
	in Section~\ref{sec:coord} for the case of the coordinated model. 
Section~\ref{sec:decent} is dedicated to the study of the decentralized model.
We start by deriving from condition D1 the eventual positivity of some columns 
	of the successive products of the matrices~$A^{\Delta}(t)$.
Then we refine the ``stationarity'' condition elaborated in Section~\ref{sec:column}
	to prove that asymptotic consensus in the decentralized model.

\subsection{Reduction to the zero delay case}\label{sec:red}

We mimic the classical reduction of a $\Delta$-th order ordinary differential
	equation to a system of $\Delta$ ordinary differential equations of first order.
For any time $t\geq \Delta -1$, let $X(t)_{(\delta,i)}$ denote the family
	in $\IR^{[\Delta] \times [N]}$ defined by
	$$ X(t)_{(\delta,i)} = x_i(t-\delta +1 )\enspace,$$
		where $\delta \in [\Delta]$ and $i\in [N]$.
Letting  $\delta_{ij}(t) = t - \tau_{ij}(t) +1 \in \{0,\cdots, \Delta -1\}$, 
	equation~(\ref{eq:linear}) can be rewritten in the following way 
	$$ X(t + 1)_{(\delta,i)} =  \left\{ \begin{array}{ll}
	        X(t)_{(\delta - 1,i)} & \mbox{ if } \delta \in \{2,\cdots,\Delta\}     \\ \\
	        \sum_{j=1}^N A_{i,j}(t) X(t)_{(\delta_{ij}(t),\,j)} & \mbox{ if } \delta = 1 \enspace,
	\end{array}\right. $$ 
	which  is equivalent to 
$$  X(t + 1)_{(\delta,i)} =  
	\sum_{(\delta',\,j) \in [\Delta] \times [N]} A^{\Delta}_{(\delta,i),(\delta',\,j)}(t) X(t)_{(\delta',\,j)} $$
	with $$ A^{\Delta}_{(\delta,i),(\delta',\,j)}(t) =  \left\{ \begin{array}{ll}
	            1 & \mbox{ if } i =j,\ \delta' = \delta - 1, \mbox{ and } \delta \in \{2,\cdots,\Delta\} \\ 
	            A_{i,j}(t) & \mbox{ if } \delta = 1  \mbox{ and } \delta' = \delta_{ij}(t) \\
	            0 & \mbox{ otherwise.}
     	\end{array}\right. $$
The key point here is that  from time $\Delta -1$ on, the vector $X$ is updated according to the linear
			equation  with ``zero delay'' $$X(t+1) = A^{\Delta}(t)X(t)\enspace.$$
			
For ease of notation, in what follows we encode each ordered pair 
	$(\delta,i)$ in $[\Delta] \times [N]$  into
	the integer $k = \Delta i - \delta +1$ in $[\Delta N]$.
Then the vector $X(t)$ is  in $\IR^{\Delta N}$ and  $A^{\Delta}(t)$ is a $\Delta N \times \Delta N$ matrix.
The updating rule for $X(t)$ can be rewritten
	$$  X_k(t+1) =  \left\{ \begin{array}{ll}
	         X_{k+1}(t)  & \mbox{ if $k$ is not a multiple of } \Delta    \\ \\
	        \sum_{j=1}^N A_{i,j}(t) X_{\Delta j - \delta_{ij}(t) + 1}(t) & \mbox{ otherwise.}
	\end{array}\right. $$
Using the Kronecker delta $\delta_{m,n}$ (not to be confused with the delays $\delta_{ij}(t)$),
	the above expression of~$A^{\Delta}_{(\delta',j),(\delta,i)}(t)$ implies that
\begin{enumerate}
	\item $A^{\Delta}_{m,n}(t) = \delta_{m + 1,n}$ if $m$ is not a multiple of $\Delta$;
	\item all the entries $A^{\Delta}_{\Delta i, \Delta (j-1) +1}(t), \dots, A^{\Delta}_{\Delta i, \Delta j}(t)$ 
		are null except one which is equal to $A_{i,j}(t)$;
	\item $A^{\Delta}_{\Delta i, \Delta i}(t) = A_{i,i}(t)$.
\end{enumerate}
We observe that each matrix $A^{\Delta}(t)$ is stochastic (A1), and every positive
	entry of $A^{\Delta}(t)$ is at least equal to $\alpha$ (A3).
However neither A2 nor D1 holds for the matrices $A^{\Delta}(t)$, and the above
	reduction does not allow us to limit ourselves
	to the zero delay case $\Delta =1$ while maintaining the basic assumptions A and B.

We now choose some time $t_0\geq \Delta -1$ which stays fixed in the rest of the section 
	except in the very last steps of the proofs of Theorems~\ref{thm:coord} and~\ref{thm:decent}.
For any time $t \geq t_0$, we let $$P(t)= A^{\Delta}(t)\dots A^{\Delta}(t_0) \enspace.$$
Because of the above mentioned properties of the matrix $A^{\Delta}(t)$, we have the following
	recurrence relations for $P(t)$'s entries:

\begin{description}
	\item[R1:] $P_{m, \Delta j}(t +1) =  P_{m+1, \Delta j}(t)$, for each index $m$ that is not a multiple of $\Delta$;
	\item[R2:]  $P_{\Delta i, \Delta j}(t +1) = \sum_{k=1}^N A_{i, k}(t+1) P_{m_k, \Delta j}(t)$, where
	for each index $k$ in the sum, $m_k$ is some integer in $\{\Delta (k-1) +1, \dots, \Delta k \}$, 
	and where $m_k = \Delta k $ for $k=i$.
\end{description}
	
\subsection{Positive entries in the $\Delta j$-th column}\label{sec:column}

In this section, we fix some index $j\in [N]$, and study the entries
	in the $\Delta j$-th column of~$P(t)$.
We define the two sets
	$$ S_j^{\Delta}(t)  =  \{ m \in [\Delta N] \mid P_{m, \Delta j}(t)>0 \}
	 \ \mbox{ and }\ S_j(t)  =  \{ i \in [N] \mid P_{\Delta i, \Delta j}(t)>0 \} 
	\enspace.$$

\begin{lemma}\label{lem:S0}
$\Delta j \in S_j^{\Delta}(t_0)$.	
\end{lemma}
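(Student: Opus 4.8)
The plan is to unwind the definitions and reduce the claim to the positivity of a single diagonal entry. The first observation is that for $t = t_0$ the product defining $P$ collapses to one factor: since $P(t) = A^{\Delta}(t)\cdots A^{\Delta}(t_0)$, we have $P(t_0) = A^{\Delta}(t_0)$, and therefore $P_{m,\Delta j}(t_0) = A^{\Delta}_{m,\Delta j}(t_0)$ for every index $m$. In particular, establishing $\Delta j \in S_j^{\Delta}(t_0)$ is exactly the same as verifying $A^{\Delta}_{\Delta j,\Delta j}(t_0) > 0$, i.e.\ that the reduced matrix $A^{\Delta}(t_0)$ has a positive entry on the diagonal in the row and column indexed by $\Delta j$.

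Next I would invoke the structural description of $A^{\Delta}(t)$ derived in Section~\ref{sec:red}. Under the encoding $(\delta,i)\mapsto \Delta i - \delta + 1$, the integer $\Delta j$ corresponds to the pair $(1,j)$, which records the current value $x_j(t)$ of agent~$j$; the third of the three listed properties of $A^{\Delta}(t)$ states precisely that the corresponding diagonal entry satisfies $A^{\Delta}_{\Delta i,\Delta i}(t) = A_{i,i}(t)$ for every $i\in[N]$ and every $t$. (This identity itself rests on assumption~B2, namely $\tau_{ii}(t)=t$, so that $\delta_{ii}(t)=1$, but it is already available to us as part of the reduction.) Specializing to $i = j$ and $t = t_0$ gives $A^{\Delta}_{\Delta j,\Delta j}(t_0) = A_{j,j}(t_0)$.

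Finally, assumption~A2 guarantees that every communication graph $G(t)$ contains all self-loops, so $A_{j,j}(t_0) > 0$. Chaining the three steps yields $P_{\Delta j,\Delta j}(t_0) = A^{\Delta}_{\Delta j,\Delta j}(t_0) = A_{j,j}(t_0) > 0$, which is exactly the assertion $\Delta j \in S_j^{\Delta}(t_0)$. I do not anticipate any genuine obstacle here: the lemma is a direct bookkeeping consequence of the reduction together with the self-loop hypothesis~A2, and the only point requiring mild care is correctly matching the diagonal index $\Delta j$ to the self-value pair $(1,j)$ under the chosen encoding so that the third structural property applies.
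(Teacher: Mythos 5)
Your proof is correct and follows exactly the paper's own argument: the paper likewise observes that $P(t_0)=A^{\Delta}(t_0)$ and then cites property~(3) of $A^{\Delta}(t_0)$ together with assumption~A2 on $A(t_0)$. Your write-up merely makes explicit the index bookkeeping (that $\Delta j$ encodes the pair $(1,j)$ and that property~(3) rests on~B2), which the paper leaves implicit.
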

\begin{proof}
By definition, $P(t_0) = A^{\Delta}(t_0)$.
The lemma directly follows from property (3) of the matrix~$A^{\Delta}(t_0)$, and from 
	assumption A2 on the matrix~$A(t_0)$.
\end{proof}

\begin{lemma}\label{lem:Sincrease}
For all $t\geq t_0 ,\ S_j(t) \subseteq S_j(t+1)$.	
\end{lemma}
\begin{proof}
From the recurrence relation R2, we deduce that 
	$$P_{\Delta i, \Delta j}(t +1) \geq A_{i, i}(t+1) \, P_{\Delta i, \Delta j}(t)\enspace,$$
	and the lemma follows from assumption A2 on the matrix~$A(t+1)$.
\end{proof}

\begin{lemma}\label{lem:SDeltaincrease}
For all $t\geq t_0 ,\ S_j^{\Delta}(t) \subseteq S_j^{\Delta}(t+1)$.	
\end{lemma}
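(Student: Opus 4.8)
The plan is to prove the inclusion by induction on $t \geq t_0$, treating separately the indices $m$ that are multiples of $\Delta$ and those that are not. The reason for this split is that the two recurrence relations R1 and R2 act very differently on the two kinds of rows: R2 governs the ``aggregation'' rows $\Delta i$ and already yielded the monotonicity of $S_j$ in Lemma~\ref{lem:Sincrease}, whereas R1 merely shifts the entries of a non-multiple row upward by one position. So I would deliberately \emph{not} try to reproduce the argument of Lemma~\ref{lem:Sincrease} for the non-multiple rows: because of the shift, a positive entry sitting at row $m$ at time $t$ need not remain at row $m$ at time $t+1$, and the naive ``stay-in-place'' reasoning fails.

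For a multiple index the claim is immediate and needs no induction: if $m = \Delta i \in S_j^{\Delta}(t)$ then $i \in S_j(t)$, and Lemma~\ref{lem:Sincrease} gives $i \in S_j(t+1)$, i.e. $m \in S_j^{\Delta}(t+1)$. The heart of the matter is the non-multiple case, and here the idea is to exploit R1 twice. If $m$ is not a multiple of $\Delta$, R1 supplies the two equivalences $m \in S_j^{\Delta}(t) \iff m+1 \in S_j^{\Delta}(t-1)$ and $m \in S_j^{\Delta}(t+1) \iff m+1 \in S_j^{\Delta}(t)$. Chaining these through the induction hypothesis at step $t-1$ applied to the index $m+1$ --- namely $m+1 \in S_j^{\Delta}(t-1) \Rightarrow m+1 \in S_j^{\Delta}(t)$ --- turns $m \in S_j^{\Delta}(t)$ into $m \in S_j^{\Delta}(t+1)$, as required. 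In other words, the shift structure of R1 lets the non-multiple part of the statement at time $t$ be ``fed'' by the very same statement one step earlier.

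It remains to settle the base case $t = t_0$, which is the single place where the combinatorial structure of $A^{\Delta}(t_0)$ enters and which I expect to be the only genuinely delicate point. Since $P(t_0) = A^{\Delta}(t_0)$, property~(1) of the reduction shows that for a non-multiple row $m$ one has $A^{\Delta}_{m, \Delta j}(t_0) = \delta_{m+1, \Delta j}$, so the only non-multiple index lying in $S_j^{\Delta}(t_0)$ is $m = \Delta j - 1$ (and there is none at all when $\Delta = 1$). For this single index, $m+1 = \Delta j \in S_j^{\Delta}(t_0)$ by Lemma~\ref{lem:S0}, and one further application of R1 gives $m = \Delta j - 1 \in S_j^{\Delta}(t_0+1)$; the multiple indices at $t_0$ are handled exactly as above via Lemma~\ref{lem:Sincrease}. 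The main obstacle is thus conceptual rather than computational: recognizing that, unlike the aggregation rows, the non-multiple rows cannot be controlled in place and must instead be tracked through the one-step shift of R1, so that time-monotonicity of $S_j^{\Delta}$ for these rows becomes precisely a consequence of the same monotonicity one step earlier, with the boundary input supplied by Lemma~\ref{lem:S0}.
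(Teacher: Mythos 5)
Your proof is correct, and it assembles the same ingredients as the paper's --- Lemma~\ref{lem:Sincrease} for the rows $\Delta i$, the shift relation R1 for the other rows, and the structure of $A^{\Delta}(t_0)$ together with Lemma~\ref{lem:S0} at the initial time --- but organizes them differently. The paper argues directly, with no induction: writing $m=\Delta i-\ell$ and $t=t_0+\theta$, it applies R1 as many times as possible and distinguishes which boundary is reached first. If $\theta\geq\ell$, the telescoping reaches the multiple row $\Delta i$ at time $t-\ell\geq t_0$ and Lemma~\ref{lem:Sincrease} concludes; if $\theta<\ell$, it reaches time $t_0$ in a non-multiple row, which the explicit description of $S_j^{\Delta}(t_0)$ forces to be $\Delta j-1$, so that $P_{m,\Delta j}(t+1)=A_{j,j}(t_0)>0$ by A2. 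Your induction on $t$ compresses this unrolling into one backward and one forward application of R1 plus the induction hypothesis at index $m+1$; unfolding your recursion recovers exactly the paper's dichotomy, the branch that hits a multiple row ending in your Lemma~\ref{lem:Sincrease} step, and the branch that hits $t_0$ ending in your base case. The main thing your version buys is freedom from index--time bookkeeping: the paper must check that all intermediate rows in the telescope are non-multiples of $\Delta$, and its case $\theta<\ell$ in fact contains a small slip (the unrolled entry is $P_{\Delta i-\ell+\theta+1,\Delta j}(t_0)$, not $P_{\Delta i-\ell+\theta-1,\Delta j}(t_0)$; only with the $+1$ does the subsequent identification with $A_{j,j}(t_0)$ go through). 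Conversely, the paper's explicit unrolling exhibits concretely where each positive entry originates, but nothing downstream depends on that extra information, so your inductive formulation is a faithful and arguably cleaner substitute.
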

\begin{proof}
	First we give a more precise description of $S_j^{\Delta}(t_0)$.
	Since $P(t_0) = A^{\Delta}(t_0)$,  properties~(1) and~(2) of the matrix $A^{\Delta}(t_0)$ implies that
		$$ m \in S_j^{\Delta}(t_0) \Leftrightarrow \left \{ \begin{array}{l}
		                                         m = \Delta i \ \mbox{ and } 
		\  A^{\Delta}_{\Delta i, \Delta j}(t_0) >0 \\
		                  \hspace{1cm}                       \mbox{ or } \\
		                                         m = \Delta j - 1 \enspace.
	                                            \end{array} \right. $$
	
	Let $t\geq t_0$, and $m \in S_j^{\Delta}(t)$.
	We denote $t = t_0 + \theta$, and $m = \Delta i - \ell$ with 
	$i\in [N]$ and $\ell\in \{0, \cdots, \Delta - 1\}$.
	We consider two cases:
	\begin{enumerate}
		\item $\theta \geq \ell$. \\
		By as many applications as possible  of the recurrence relation~R1, we obtain
			$$P_{m,\Delta j}(t) = P_{\Delta i,\Delta j}(t_0 +\theta - \ell) \ \mbox{ and } \ 
			  P_{m,\Delta j}(t + 1) = P_{\Delta i,\Delta j}(t_0 + 1 + \theta  - \ell)  \enspace.$$
		The lemma follows from Lemma~\ref{lem:Sincrease} in this case.
		\item $\theta < \ell$. \\
		Then $\ell \geq 1$, i.e., $m$ is not a multiple of $\Delta$,
		 	and we can apply the recurrence relation~R1 to obtain 
		$$P_{m,\Delta j}(t) = P_{\Delta i - \ell +\theta ,\Delta j}(t_0) \ \mbox{ and } \ 
		  P_{m,\Delta j}(t + 1) = P_{\Delta i - \ell +\theta - 1 ,\Delta j}(t_0)  \enspace.$$	
	\end{enumerate}
		In this case, $\Delta i - \ell +\theta$ is not a multiple of $\Delta$,
		and from the above description of $S_j^{\Delta}(t_0)$ follows that 
		$\Delta i - \ell +\theta = \Delta j - 1$.
		Hence $P_{m,\Delta j}(t + 1) =  A_{j,\,j}(t_0)$.
		The lemma follows from assumption A2 on the matrix~$A(t_0)$ in this case.
\end{proof}

By Lemmas~\ref{lem:S0} and~\ref{lem:Sincrease}, each set $S_j^{\Delta}(t)$ is non-empty.
We define $\pi_j(t)$ as the minimum positive entry in the $\Delta j$-th column
	of the matrix $P(t)$, or
	$$\pi_j(t) = \min\{ P_{m, \Delta j}(t)  \mid  m \in  S_j^{\Delta}(t) \} \enspace .$$
	
\begin{lemma}\label{lem:phi}
For all $t\geq t_0,\ \pi_j(t +1)\geq \alpha \, \pi_j(t)$.	
\end{lemma}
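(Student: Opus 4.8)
The plan is to bound $P_{m,\Delta j}(t+1)$ from below by $\alpha\,\pi_j(t)$ for \emph{every} index $m$ of a positive entry in the $\Delta j$-th column at time $t+1$, and then take the minimum over all such $m$. Concretely, I would fix an arbitrary $m \in S_j^{\Delta}(t+1)$, so that $P_{m,\Delta j}(t+1) > 0$, and argue by distinguishing whether or not $m$ is a multiple of $\Delta$, invoking the two recurrence relations R1 and R2 respectively.

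First, if $m$ is not a multiple of $\Delta$, relation R1 gives $P_{m,\Delta j}(t+1) = P_{m+1,\Delta j}(t)$. Since the left-hand side is positive, so is $P_{m+1,\Delta j}(t)$; hence $m+1 \in S_j^{\Delta}(t)$ and $P_{m+1,\Delta j}(t) \geq \pi_j(t)$ by definition of $\pi_j(t)$. As $\alpha \leq 1$, this already yields $P_{m,\Delta j}(t+1) \geq \pi_j(t) \geq \alpha\,\pi_j(t)$.

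Second, if $m = \Delta i$ for some $i\in[N]$, relation R2 expresses $P_{\Delta i,\Delta j}(t+1)$ as the non-negative sum $\sum_{k=1}^N A_{i,k}(t+1)\,P_{m_k,\Delta j}(t)$. Since this sum is positive, at least one summand is positive, say the one indexed by some $k$, so that $A_{i,k}(t+1) > 0$ and $P_{m_k,\Delta j}(t) > 0$. The positive matrix entry is bounded below by $\alpha$ thanks to assumption A3, while $P_{m_k,\Delta j}(t) \geq \pi_j(t)$ because $m_k \in S_j^{\Delta}(t)$. Dropping every other (non-negative) summand then gives $P_{\Delta i,\Delta j}(t+1) \geq A_{i,k}(t+1)\,P_{m_k,\Delta j}(t) \geq \alpha\,\pi_j(t)$.

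Combining the two cases, every positive entry of the $\Delta j$-th column of $P(t+1)$ is at least $\alpha\,\pi_j(t)$, and taking the minimum over $m\in S_j^{\Delta}(t+1)$ yields $\pi_j(t+1)\geq \alpha\,\pi_j(t)$. I do not expect a genuine obstacle; the only point requiring care is the multiple-of-$\Delta$ case, where one must resist bounding each summand and instead use positivity of the whole sum to extract a \emph{single} usable positive term, discarding the rest by non-negativity. It is worth noting that the diagonal contribution $k=i$ (positive by A2 whenever $\Delta i \in S_j^{\Delta}(t)$) is never actually needed: the mere positivity of the sum, guaranteed by $m \in S_j^{\Delta}(t+1)$, already supplies a positive term to work with.
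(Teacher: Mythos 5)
Your proof is correct, and its overall skeleton matches the paper's: fix $m \in S_j^{\Delta}(t+1)$, split on whether $m$ is a multiple of $\Delta$, and apply R1 or R2 accordingly. The genuine difference is in the multiple-of-$\Delta$ case. The paper bounds the sum in R2 by its \emph{diagonal} term, $P_{\Delta i,\Delta j}(t+1) \geq A_{i,i}(t+1)\, P_{\Delta i,\Delta j}(t)$, and then invokes A2 and A3; you instead use positivity of the whole sum to extract \emph{some} positive summand $A_{i,k}(t+1)\,P_{m_k,\Delta j}(t)$ and bound it below by $\alpha\,\pi_j(t)$ using A3 alone. Your variant is actually the more careful one: the paper's diagonal bound yields $\alpha\,\pi_j(t)$ only when $\Delta i \in S_j^{\Delta}(t)$, whereas for an entry that becomes positive for the first time at step $t+1$ (which can happen, since the inclusion $S_j^{\Delta}(t) \subseteq S_j^{\Delta}(t+1)$ may be strict) the diagonal term is $A_{i,i}(t+1)\cdot 0 = 0$ and the paper's argument as written gives nothing. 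Your extraction of an arbitrary positive term covers precisely that case, and, as you observe, it shows the lemma needs only A3 and not A2; the cost is merely that your bound does not record which summand did the work, which is irrelevant here since only the minimum over $S_j^{\Delta}(t+1)$ matters.
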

\begin{proof}
Let $m \in S_j^{\Delta}(t+1)$, i.e., $P_{m , \Delta j}(t +1) >0$.
There are two cases to consider:
\begin{enumerate}
	\item $m$ is  a multiple of $\Delta$, i.e., $m=\Delta i$ for some $i\in [N]$.
	Relation~R2 implies that 
		$$P_{\Delta i, \Delta j}(t +1) \geq A_{i, i}(t+1)\  P_{\Delta i, \Delta j}(t)  \enspace.$$
	Since the matrix $A(t+1)$ satisfies A2 and A3, we  have
	$P_{\Delta i, \Delta j}(t +1) \geq \alpha \, \pi_j(t)$.
	\item $m$ is not a multiple of $\Delta$.
	By relation~R1, we obtain $$P_{m , \Delta j}(t +1) = P_{m +1, \Delta j}(t) \enspace.$$
	Hence $m + 1 \in S_j^{\Delta}(t)$,
	and thus $P_{m , \Delta j}(t +1) \geq \pi_j(t)$.
	
\end{enumerate}
In both cases we  have $P_{m, \Delta j}(t +1) \geq \alpha \, \pi_j(t)$, as needed.
\end{proof}

\begin{lemma}\label{lem:key}
Let $t\geq t_0$.
If $S_j^{\Delta}(t) = S_j^{\Delta}(t+1)$, then for each index $i$ in $S_j(t)$,  all the entries
 	$P_{\Delta (i-1) + 1, \Delta j}(t), \dots , P_{\Delta i, \Delta j}(t)$  are positive.
\end{lemma}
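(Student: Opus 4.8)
The plan is to prove the statement directly, by a downward induction along the block of $\Delta$ consecutive positions attached to agent~$i$, exploiting the pure shift structure recorded in the recurrence relation~R1. Recall that the positions $\Delta(i-1)+1, \dots, \Delta i$ constitute this block, with the unique multiple of~$\Delta$, namely $\Delta i$, sitting at its head, and that R1 expresses the fact that at each time step every non-head position inherits the value held one slot closer to the head at the previous time. Since $i\in S_j(t)$ means exactly that the head entry $P_{\Delta i,\Delta j}(t)$ is positive, the idea is to let this positivity cascade down the block one slot at a time, using the stationarity hypothesis $S_j^{\Delta}(t)=S_j^{\Delta}(t+1)$ to transport positivity back from time $t+1$ to time~$t$ at each stage.

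Concretely, I would run an induction on $\ell\in\{0,\dots,\Delta-1\}$ whose claim is that $\Delta i-\ell\in S_j^{\Delta}(t)$. The base case $\ell=0$ is precisely the hypothesis $i\in S_j(t)$. For the inductive step, I first note that for $0\le\ell\le\Delta-2$ the index $\Delta i-(\ell+1)$ lies strictly between $\Delta(i-1)$ and $\Delta i$, hence is not a multiple of~$\Delta$, so relation~R1 applies at this index and yields
$$P_{\Delta i-(\ell+1),\,\Delta j}(t+1)=P_{\Delta i-\ell,\,\Delta j}(t)>0,$$
the final inequality being the inductive hypothesis. Thus $\Delta i-(\ell+1)\in S_j^{\Delta}(t+1)$, and the assumed equality $S_j^{\Delta}(t+1)=S_j^{\Delta}(t)$ pulls this membership back to time~$t$, closing the step. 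After $\Delta-1$ iterations every position of the block belongs to $S_j^{\Delta}(t)$, which is the desired conclusion.

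The only point genuinely requiring care --- and the sole place I would expect a slip --- is the index bookkeeping: one must check that the indices $\Delta i-(\ell+1)$ visited by the induction are exactly the non-head positions of the block, so that R1 is legitimately applicable at each of them, and one must use the support hypothesis in the correct direction (positivity is first \emph{established} at time $t+1$ through the shift, then \emph{transported back} to time~$t$). The degenerate case $\Delta=1$, where the block reduces to the singleton $\{\Delta i\}$ and nothing is to be shown, is worth flagging but trivial. Notably, the argument needs neither relation~R2 nor assumption~A3 nor the quantitative estimate of Lemma~\ref{lem:phi}; it is purely combinatorial and rests solely on the shift relation~R1 together with the given stationarity of the support.
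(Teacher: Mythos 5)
Your proof is correct and follows essentially the same route as the paper's own argument: a decreasing induction along the block $\Delta(i-1)+1,\dots,\Delta i$, using relation~R1 to establish positivity at time $t+1$ one slot below the current index, and the stationarity $S_j^{\Delta}(t)=S_j^{\Delta}(t+1)$ to transport it back to time~$t$. Your index bookkeeping (the visited indices $\Delta i-(\ell+1)$ with $0\le\ell\le\Delta-2$ are never multiples of $\Delta$) is, if anything, stated slightly more carefully than in the paper's version of the same induction.
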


\begin{proof}
Let $i \in S_j(t)$.
By decreasing induction on $m$, $\Delta\,(i-1) \leq m \leq \Delta\, i$, 
	we prove 	that $$ P_{m, \Delta j}(t)  > 0\enspace.$$
\begin{enumerate}
	\item The basic case $m = \Delta\, i$ corresponds to  $i \in S_j(t)$.
	 
	\item For the inductive step, let us assume that $ P_{m, \Delta j}(t)\!>\! 0$
	with $m\!\in\!\{\Delta\,(i\!-\!1)\!+\! 1, \cdots, \Delta\, i\}$.
	Therefore, $m-1$ is not a multiple of $\Delta$ and by relation~R1, we have 
	$$P_{m-1, \Delta j}(t+1) = P_{m, \Delta j}(t) \enspace.$$
	Hence $m-1\in S_j^{\Delta}(t+1)$.
	Since $S_j^{\Delta}(t) = S_j^{\Delta}(t+1)$, we obtain  $P_{m, \Delta j}(t) > 0$, as required.
\end{enumerate}
\end{proof}

\begin{lemma}\label{lem:incoming}
If $S_j^{\Delta}(t) = S_j^{\Delta}(t+1)$, then $S_j(t)$ has  no incoming edge in the graph $G(t+1)$.
\end{lemma}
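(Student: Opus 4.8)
The plan is to argue by contradiction using the recurrence relation~R2 together with Lemma~\ref{lem:key}. By the definition of the communication graph, an incoming edge of the set $S_j(t)$ in $G(t+1)$ is an edge $(a,i)$ with $a\notin S_j(t)$, $i\in S_j(t)$, and $A_{a,i}(t+1)>0$. So I would assume that such indices $a$ and $i$ exist and aim to derive that $a$ must in fact belong to $S_j(t)$, contradicting $a\notin S_j(t)$.

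Applying R2 to the row $\Delta a$ gives
$$P_{\Delta a,\Delta j}(t+1)=\sum_{k=1}^N A_{a,k}(t+1)\,P_{m_k,\Delta j}(t),$$
where each $m_k$ lies in the block $\{\Delta(k-1)+1,\dots,\Delta k\}$. The summand I focus on is $k=i$: its coefficient $A_{a,i}(t+1)$ is positive because $(a,i)$ is an edge of $G(t+1)$. The crucial point is to show that the accompanying factor $P_{m_i,\Delta j}(t)$ is also positive, and this is exactly where the hypothesis $S_j^{\Delta}(t)=S_j^{\Delta}(t+1)$ enters: since $i\in S_j(t)$, Lemma~\ref{lem:key} guarantees that every entry $P_{\Delta(i-1)+1,\Delta j}(t),\dots,P_{\Delta i,\Delta j}(t)$ of the $i$-th block is positive, so in particular $P_{m_i,\Delta j}(t)>0$ no matter which index $m_i$ of the block the delay selects.

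I expect this to be the main subtlety: because of communication delays the value agent $a$ reads from $i$ need not be $i$'s current entry $P_{\Delta i,\Delta j}(t)$ but a delayed one sitting lower in the $i$-th block, so one cannot merely invoke $i\in S_j(t)\subseteq S_j(t+1)$; the full-block positivity provided by Lemma~\ref{lem:key} is what closes this gap (the issue is vacuous when $\Delta=1$). Since all summands in R2 are non-negative and the $k=i$ summand is strictly positive, I would conclude $P_{\Delta a,\Delta j}(t+1)>0$, that is $\Delta a\in S_j^{\Delta}(t+1)=S_j^{\Delta}(t)$, whence $P_{\Delta a,\Delta j}(t)>0$ and $a\in S_j(t)$. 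This contradiction establishes that $S_j(t)$ has no incoming edge in $G(t+1)$.
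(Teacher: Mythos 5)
Your proof is correct and follows essentially the same route as the paper's: both argue by contradiction from an incoming edge, invoke relation~R2 on the row of the outside node, and use Lemma~\ref{lem:key} to guarantee that the delayed entry selected within the block of the inside node is positive, yielding a positive entry that contradicts the stationarity hypothesis $S_j^{\Delta}(t)=S_j^{\Delta}(t+1)$. The only difference is cosmetic: the paper states the contradiction as the vanishing entry $P_{\Delta i,\Delta j}(t+1)=0$ versus its positivity, while you transport the positivity back to time $t$ via the set equality; these are the same argument.
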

\begin{proof}
By contradiction, suppose that $S_j^{\Delta}(t) = S_j^{\Delta}(t+1)$, and that 
	$S_j(t)$ has  an incoming edge in the graph~$G(t+1)$.
Hence  there exist $i$ and $k$ in $[N]$ such that 
	$ P_{\Delta i, \Delta j}(t) = 0$, $P_{\Delta k, \Delta j}(t) >0$, and $A_{i,k}(t+1) >0$.
Since $S_j^{\Delta}(t) = S_j^{\Delta}(t+1)$, we have $$ P_{\Delta i, \Delta j}(t+1) = 0 \enspace.$$
Moreover Lemma~\ref{lem:key} states that 
	$$\forall m \in \{\Delta\, (k-1) +1, \cdots, \Delta\, k\} : \  P_{m, \Delta j}(t) > 0 \enspace,$$ 
	and relation~R2 implies that 
	$$P_{\Delta i, \Delta j}(t+1) \geq A_{i,k}(t+1) \ P_{\ell, \Delta j}(t)$$ for 
	some $\ell \in \{\Delta\, (k-1) +1, \cdots, \Delta\, k\}$.
Hence $P_{\Delta i, \Delta j}(t+1) >0$, a contradiction.
\end{proof}

\subsection{Convergence in the coordinated model}\label{sec:coord}

In this section, we consider the coordinated model (assumptions A, B, and C),
	and prove our first convergence theorem.

We start by specializing Lemma~\ref{lem:incoming} to the 
	case of oriented communication graphs.
Let us define
	$$ S^{\Delta}(t)  =  \{ (m,j) \in [\Delta N]\times [N] \mid m \in S_j^{\Delta }(t) \}
		\enspace.$$	

\begin{lemma}\label{lem:SDeltacoord}	
If  $G(t+1)$ is $j$-oriented, then either $S^{\Delta}(t)\neq S^{\Delta}(t+1)$,
	or $S_j^{\Delta}(t) =[\Delta N]$. 
\end{lemma}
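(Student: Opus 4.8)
The statement asks: if $G(t+1)$ is $j$-oriented and $S^{\Delta}(t) = S^{\Delta}(t+1)$, then $S_j^{\Delta}(t) = [\Delta N]$. The plan is to argue by contradiction, assuming $S^{\Delta}(t) = S^{\Delta}(t+1)$ while $S_j^{\Delta}(t) \neq [\Delta N]$, and to derive a violation of the orientation hypothesis. The key leverage is Lemma~\ref{lem:incoming}, which tells us that stationarity of the $\Delta j$-th column ($S_j^{\Delta}(t) = S_j^{\Delta}(t+1)$) forces the set $S_j(t)$ to have no incoming edge in $G(t+1)$. Since $S^{\Delta}(t) = S^{\Delta}(t+1)$ in particular entails $S_j^{\Delta}(t) = S_j^{\Delta}(t+1)$ for this fixed $j$, I can invoke that lemma directly.

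**Main line of argument.**
First I would translate the assumption $S_j^{\Delta}(t) \neq [\Delta N]$ into a statement about $S_j(t)$. Using Lemma~\ref{lem:key} (valid because $S_j^{\Delta}(t) = S_j^{\Delta}(t+1)$), the positivity of $P_{\Delta i, \Delta j}(t)$ for $i \in S_j(t)$ propagates to the entire block $\{\Delta(i-1)+1, \dots, \Delta i\}$; conversely, I need to check that $S_j(t) = [N]$ would force $S_j^{\Delta}(t) = [\Delta N]$. The cleanest way is to observe that if $S_j(t) = [N]$, then every block is entirely positive by Lemma~\ref{lem:key}, giving $S_j^{\Delta}(t) = [\Delta N]$; so the assumption $S_j^{\Delta}(t) \neq [\Delta N]$ yields $S_j(t) \neq [N]$, i.e., $S_j(t)$ is a proper nonempty subset of $[N]$ (nonempty by Lemma~\ref{lem:S0}, since $\Delta j \in S_j^{\Delta}(t_0) \subseteq S_j^{\Delta}(t)$ via Lemma~\ref{lem:SDeltaincrease} gives $j \in S_j(t)$, at least after unwinding the R1 shifts — a point to verify carefully).

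**Invoking orientation.**
By Lemma~\ref{lem:incoming}, the proper nonempty set $S_j(t)$ has no incoming edge in $G(t+1)$. Now I bring in the hypothesis that $G(t+1)$ is $j$-oriented: every node has a directed path terminating at $j$, and $j \in S_j(t)$. Take any node $i \notin S_j(t)$ (which exists since $S_j(t) \neq [N]$). The path from $i$ to $j$ must at some point cross from the complement of $S_j(t)$ into $S_j(t)$, producing an edge $(i', k)$ with $i' \notin S_j(t)$ and $k \in S_j(t)$ — that is, an incoming edge to $S_j(t)$ in $G(t+1)$. This contradicts the conclusion of Lemma~\ref{lem:incoming}, completing the proof.

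**Anticipated obstacle.**
The main subtlety is not the topological crossing argument (which is routine) but correctly establishing $j \in S_j(t)$ in the index-shifted encoding, and confirming that "no incoming edge to $S_j(t)$" combined with $j$-orientation genuinely forces $S_j(t) = [N]$. The orientation condition is stated for the graph $G$ on $[N]$, whereas the positivity structure lives on $[\Delta N]$; I must be careful that the set with no incoming edge, $S_j(t) \subseteq [N]$, matches exactly the vertex set where orientation is asserted. The edge-crossing direction also needs attention: $j$-orientation guarantees a path \emph{from} each node \emph{to} $j$, so the crossing edge enters $S_j(t)$ in the correct orientation to count as an incoming edge, which is precisely what Lemma~\ref{lem:incoming} forbids.
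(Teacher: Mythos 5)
Your proof is correct and follows essentially the same route as the paper: stationarity of $S^{\Delta}$ gives $S_j^{\Delta}(t)=S_j^{\Delta}(t+1)$, Lemma~\ref{lem:incoming} forbids incoming edges to $S_j(t)$, the $j$-orientation crossing argument (with $j\in S_j(t)$ from Lemmas~\ref{lem:S0} and~\ref{lem:SDeltaincrease}) forces $S_j(t)=[N]$, and Lemma~\ref{lem:key} then yields $S_j^{\Delta}(t)=[\Delta N]$. The only difference is cosmetic: you argue by contradiction and spell out the crossing step that the paper leaves implicit, and your worry about ``unwinding R1 shifts'' is unfounded since $\Delta j\in S_j^{\Delta}(t)$ translates directly into $j\in S_j(t)$ by the definitions of the two sets.
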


\begin{proof}
Suppose that $S^{\Delta}(t) = S^{\Delta}(t+1)$.
Therefore $S_j^{\Delta}(t) = S_j^{\Delta}(t+1)$.
Lemma~\ref{lem:incoming} ensures that $S_j(t)$ has no incoming link in~$G(t+1)$.
Since $G(t+1)$ is $j$-oriented, $S_j(t) = [N]$.
By  Lemma~\ref{lem:key}, we conclude that $S_j^{\Delta}(t) = [\Delta N]$, as required.
\end{proof}

The latter lemma allows us to show that  there is an agent~$j$ such that  
	$S_j^{\Delta}$ is 
	equal to  $[\Delta N]$ by time $t_0 + \Delta N^2 -2N+1$.

\begin{proposition}\label{pro:ergodicfini}
In the coodinated model, there exists an agent~$j$ such that 
	$$ \forall m\in [\Delta N] \ :\ P_{m,\Delta j} (t_0 + \Delta N^2 -2N+1) > 0 \enspace. $$
\end{proposition}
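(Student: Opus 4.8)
The plan is to run a monotonicity-and-counting argument on the aggregate set $S^{\Delta}(t)$, with Lemma~\ref{lem:SDeltacoord} as the engine. First I would observe that, by Lemma~\ref{lem:SDeltaincrease}, each $S_j^{\Delta}(t)$ is non-decreasing in $t$, hence so is $S^{\Delta}(t) \subseteq [\Delta N]\times[N]$; in particular, once some $S_j^{\Delta}$ equals $[\Delta N]$ it remains equal to $[\Delta N]$ forever, which is precisely the conclusion sought. Since $\lvert S^{\Delta}(t)\rvert \leq \Delta N^2$, the sequence $S^{\Delta}(t)$ can strictly increase only finitely often, and the number of strict increases after $t_0$ is at most $\Delta N^2 - \lvert S^{\Delta}(t_0)\rvert$.

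Next I would locate an early \emph{stationary} step. If $S^{\Delta}$ increased strictly at every transition $t\to t+1$ for $t = t_0,\dots, t_0 + \Delta N^2 - \lvert S^{\Delta}(t_0)\rvert$, the cardinality would exceed $\Delta N^2$, a contradiction; hence there is a first time $t^{\ast} \leq t_0 + \Delta N^2 - \lvert S^{\Delta}(t_0)\rvert$ with $S^{\Delta}(t^{\ast}) = S^{\Delta}(t^{\ast}+1)$. Condition~C guarantees that $G(t^{\ast}+1)$ is $j$-oriented for some $j$, so Lemma~\ref{lem:SDeltacoord} forces $S_j^{\Delta}(t^{\ast}) = [\Delta N]$. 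Monotonicity then propagates this equality to every later time, and the proposition follows as soon as $t^{\ast} \leq t_0 + \Delta N^2 - 2N + 1$.

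This reduces the whole statement to the sharp lower bound $\lvert S^{\Delta}(t_0)\rvert \geq 2N - 1$, which I expect to be the main obstacle, since this is where the exact constant $-2N+1$ is earned and where the case $\Delta = 1$ must be treated separately. For $\Delta \geq 2$, the explicit description of $S_j^{\Delta}(t_0)$ established in the proof of Lemma~\ref{lem:SDeltaincrease} gives $\{\Delta j - 1,\ \Delta j\} \subseteq S_j^{\Delta}(t_0)$ for every $j$ (the index $\Delta j - 1$ arising from property~(1) and $\Delta j$ from A2 via property~(3)), so that $\lvert S^{\Delta}(t_0)\rvert \geq 2N$. For $\Delta = 1$ one has $P(t_0) = A(t_0)$, and $\lvert S^{\Delta}(t_0)\rvert$ equals the number of positive entries of $A(t_0)$, that is, the number of edges of $G(t_0)$; here I would invoke condition~C directly: orientation of $G(t_0)$ forces every node other than its coordinator to carry an outgoing non-loop edge (at least $N-1$ such edges), which together with the $N$ self-loops supplied by A2 yields at least $2N - 1$ edges. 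In all cases $\lvert S^{\Delta}(t_0)\rvert \geq 2N - 1$, whence $t^{\ast} \leq t_0 + \Delta N^2 - (2N-1)$, and by monotonicity $S_j^{\Delta}(t_0 + \Delta N^2 - 2N + 1) = [\Delta N]$ for the corresponding coordinator $j$, which is exactly the claim.
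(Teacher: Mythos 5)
Your proof is correct, and its skeleton --- monotonicity of $S^{\Delta}$ via Lemma~\ref{lem:SDeltaincrease}, the stationarity dichotomy of Lemma~\ref{lem:SDeltacoord} under condition~C, and a pigeonhole count on $|S^{\Delta}|$ --- is exactly the paper's. The genuine difference is in how the constant $\Delta N^2-2N+1$ is earned. The paper starts from the weak initial bound $|S^{\Delta}(t_0)|\geq N$ (Lemma~\ref{lem:S0}) and instead sharpens the stopping threshold: if $|S^{\Delta}(t)| > \Delta N^2-N$ then some column is already full, since $N$ incomplete columns can hold at most $N(\Delta N-1)$ pairs; growth by at least one per non-stationary step then yields the constant with no case analysis whatsoever. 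You keep the trivial threshold $\Delta N^2$ and compensate with the sharper initial bound $|S^{\Delta}(t_0)|\geq 2N-1$, which costs you a case split on $\Delta$: for $\Delta\geq 2$ the explicit description of $S_j^{\Delta}(t_0)$ supplies the two elements $\Delta j-1$ and $\Delta j$ in each column (and you correctly notice that this description degenerates when $\Delta=1$, where the index $\Delta j-1$ is unavailable --- a point the paper glosses over but does not rely on), while for $\Delta=1$ you must invoke condition~C at time $t_0$ itself, which the paper's proof never uses (it only applies orientation to the graphs $G(t+1)$, $t\geq t_0$), to collect $N-1$ non-loop edges on top of the $N$ self-loops guaranteed by A2. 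Both accountings land on exactly the same bound, and they are in fact orthogonal: combining your initial count $2N-1$ with the paper's threshold $\Delta N^2-N$ would improve the time in the proposition to $t_0+\Delta N^2-3N+2$. In short, the paper's bookkeeping is shorter and more uniform; yours is heavier but extracts strictly more information about $S^{\Delta}(t_0)$.
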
	
\begin{proof}
Let $t\geq t_0$.
As immediate consequences of Lemma~\ref{lem:S0} and Lemma~\ref{lem:SDeltaincrease},
 	respectively, we have $$|S^{\Delta}(t_0) | \geq N $$ 
			and  for all $t\geq t_0$,
			$$ S^{\Delta}(t) \subseteq S^{\Delta}(t+1) \enspace.$$
Using Lemmas~\ref{lem:SDeltacoord} and~\ref{lem:SDeltaincrease}, we obtain that 
	either $|S^{\Delta}(t) | \geq N +t - t_0 $ or $S_j^{\Delta}(t) = [\Delta N]$
	for some~$j$ in $[N]$.
We observe that if 	the cardinality of $S^{\Delta}(t)$ is greater than $\Delta N^2 - N$,
	then the matrix $P(t)$ has at least one of its
		$\Delta j$-th columns  with positive entries to complete the proof of the lemma.
	\end{proof}

\vspace{0.2cm}		
We are now in position to prove Theorem~\ref{thm:coord}.
Let us consider an agent~$j$ such that 
	$$ S_j^{\Delta}(t_0 + \Delta N^2 -2N+1) = [\Delta N] $$
	the existence of which is ensured by Proposition~\ref{pro:ergodicfini}.
By Lemma~\ref{lem:phi}, we have
		$$\pi_j(t_0 + \Delta N^2 -2N+1) \geq \alpha^{\Delta N^2 -2N+1} \enspace.$$
By Corollary~\ref{cor:boundnorm}, we derive 
	$ \lVert P(t_0 + \Delta N^2 -2N+1) \lVert_{\bot} \leq 1 - \alpha^{\Delta N^2 -2N+1}$.
In other words, we have shown that for any $t_0 \geq \Delta -1$
		$$\lVert A^{\Delta}(t_0 + \Delta N^2 -2N+1)\cdots  A^{\Delta}(t_0) \lVert_{\bot} 
		\leq 1 - \alpha^{ \Delta N^2 -2N+1}\enspace.$$ 
Together with the sub-multiplicativity of the seminorm $\lVert  \cdot \lVert_{\bot}$,
	this implies that 
	$$\lim_{t\rightarrow +\infty} \lVert A^{\Delta}(t)\cdots  A^{\Delta}(0) \lVert_{\bot} =0 \enspace.$$
Theorem~\ref{thm:coord} then follows from Proposition~\ref{pro:convergence} and the definition of
		 vector $X(t)$.

\subsection{Convergence in the decentralized model}\label{sec:decent}

We now consider the decentralized model (assumptions A, B, and D).
Under the sole assumptions A and B, the set $ S^{\Delta}$ may remain small
	forever: for example, in the case of the sequence of the powers of the unit matrix, $ S^{\Delta}$ is 
	constantly equal to the diagonal in $[N]^2$.
Firstly, we show that assumption~D1 ensures that $ S^{\Delta}$ is eventually equal to $[\Delta N] \times N$.
However, D1 provides no bound on the time required the set $ S^{\Delta}$ to be maximal,
	and so under assumptions A, B, and D1,   no positive lower bound on the positive 
	entries of the matrix $P$ are guaranteed.
In the second part of this section, we show that assumption D2  allows us to control 
	how functions $\pi_j$ can decrease in time.
	
\vspace{0.2cm}
We start by refining the strong connectivity property of the graph
	$G_0 = ([N]\,, \cup_{t\geq t_0} E(t))$ ensured by~D1.
	
\begin{lemma}\label{lem:permanentconnectivity}	
For each $i\in [N]$, there exists a path  $i_n, \cdots, i_0$ in the graph $G_0$
	starting at $i_n=i$, ending at $i_0 =j$, and such that each edge $(i_k,i_{k-1})$
	is in $E(t_k)$ with 
	$t_k \geq t_{k-1} + \Delta$.
\end{lemma}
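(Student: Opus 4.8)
The plan is to build the required path by reverse reachability from $j$, exploiting the full strength of condition~D1: not only is $G_0 = ([N]\,,\cup_{t\geq t_0}E(t))$ strongly connected, but $([N]\,,\cup_{s\geq t}E(s))$ is strongly connected for \emph{every} $t$, so an edge crossing any cut can be found at arbitrarily late times. This is precisely what lets me enforce the spacing $t_k \geq t_{k-1}+\Delta$.

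First I would grow a set of nodes that admit a valid temporal path to $j$. Formally, I maintain a pair $(R,\tau)$, where $R\subseteq [N]$ with $j\in R$ and $\tau$ is a threshold, together with the invariant that every node of $R$ carries a recorded path to $j$ of the desired form, all of whose edge-times are at most $\tau-\Delta$. I initialize $R=\{j\}$ and $\tau=t_0$, the empty path for $j$ satisfying the invariant vacuously.

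For the inductive step, while $R\neq [N]$, I apply condition~D1 to the graph $([N]\,,\cup_{s\geq \tau}E(s))$. Since this graph is strongly connected and $R$ is a proper nonempty subset, there is a directed edge entering $R$ from its complement: there exist $u\notin R$, $w\in R$, and a time $s\geq \tau$ with $(u,w)\in E(s)$. I prepend this edge to $w$'s recorded path, obtaining a directed path from $u$ to $j$ whose first edge $(u,w)$ occurs at time $s$. By the invariant, the first edge-time of $w$'s path is at most $\tau-\Delta$, so $s\geq \tau$ yields the spacing condition at the junction, while all remaining gaps are inherited from $w$'s path. I then set $R\leftarrow R\cup\{u\}$ and $\tau\leftarrow s+\Delta$, which restores the invariant, since every recorded edge-time, including the new value $s$, is now at most $\tau-\Delta$. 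Because $u\notin R$, the node $u$ does not occur in $w$'s path, so the new path is simple and $|R|$ strictly increases.

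Since $|R|$ grows by at least one at each step, after at most $N-1$ steps I reach $R=[N]$, hence $i\in R$, and its recorded path is the sought path $i_n,\dots,i_0$ (simple, so $n\leq N-1$). The point to get right is the bookkeeping of the threshold~$\tau$: keeping it a uniform $\Delta$ ahead of every edge-time used so far guarantees that each newly prepended edge is automatically spaced correctly relative to the \emph{entire} existing path. The genuinely nontrivial ingredient, and the main obstacle, is the repeated invocation of strong connectivity at the advancing time~$\tau$: it is exactly the ``for every $t$'' clause of condition~D1 that guarantees such late crossing edges always exist, which a one-shot use of strong connectivity of $G_0$ alone would not provide.
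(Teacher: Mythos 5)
Your proof is correct, and it takes a genuinely different route from the paper's. The paper's own proof is a reduction to a static graph: it introduces $G^{\infty} = ([N]\,, E^{\infty})$, the graph of edges of $G_0$ that occur infinitely often, and uses the finiteness of the edge set to find a time $T$ after which only edges of $E^{\infty}$ appear, so that D1 applied at time $T$ shows $G^{\infty}$ is strongly connected; one then fixes a path from $i$ to $j$ in $G^{\infty}$ and assigns the times greedily from $j$ outward, each edge being available at arbitrarily late times precisely because it recurs infinitely often. You never pass to $G^{\infty}$: you grow the set $R$ of nodes already linked to $j$ by a correctly timed path, re-invoking D1 at the advancing threshold $\tau$ to extract a crossing edge $(u,w)$ with $u\notin R$, $w\in R$ at some time $s\geq\tau$, so that the path and its timing are built simultaneously. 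Both arguments hinge on the same leverage --- the ``for every $t$'' quantifier in D1, as you rightly emphasize --- but the paper's route buys brevity (the lemma is indeed immediate once $G^{\infty}$ is seen to be strongly connected), while yours dispenses with the finitely-many-edges argument and the existence of $T$, needing only the finiteness of $[N]$ for termination, and it makes explicit both the simplicity of the path and the bound $n\leq N-1$. One small adjustment: in the lemma the constraint for $k=1$ should be read as $t_1 \geq t_0 + \Delta$, where $t_0$ is the base time fixed at the end of Section~\ref{sec:red}; this stronger form is what the proof of Proposition~\ref{pro:ergodic} exploits, since after the edge into the block of $\Delta j$ one may still need to traverse up to $\Delta - 1$ edges of the path $\Delta(j-1)+1,\cdots,\Delta j$ in $G^{\Delta}$ before reaching $\Delta j$ at time $t_0$. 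Your construction as written only yields $t_1 \geq t_0$, so you should initialize $\tau = t_0 + \Delta$ rather than $\tau = t_0$; your invariant and induction then go through verbatim.
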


\begin{proof}
We consider the graph $G^{\infty} = ([N]\,, E^{\infty})$ where $E^{\infty}$
	is the set of edges in $G_0$ that occur infinitely often.
The lemma immediately follows from~D1.
\end{proof}

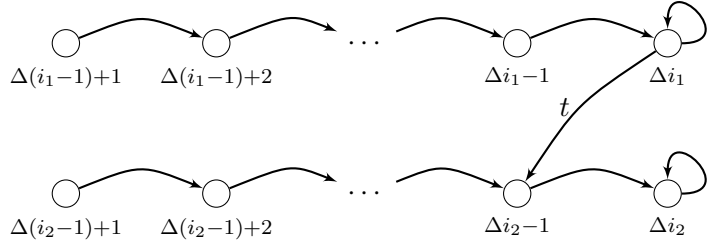
\begin{figure}\label{fig:tn}
\centering
\begin{tikzpicture}[>=latex',scale=1.0]
	\node[label=below:$\scriptstyle\Delta(i_1-1)+1$,shape=circle,draw] (a1) at (-6,0) {};
	\node[label=below:$\scriptstyle\Delta(i_1-1)+2$,shape=circle,draw] (b1) at (-4,0) {};
	\node (c1) at (-2,0) {$\cdots$};
	\node[label=below:$\scriptstyle\Delta i_1-1$,shape=circle,draw] (d1) at (0,0) {};
	\node[label=below:$\scriptstyle\Delta i_1$,shape=circle,draw] (e1) at (2,0) {};
	\draw[thick,->] (a1) .. controls +(1,0.4) ..  (b1);
	\draw[thick,->] (b1) .. controls +(1,0.4) ..  (c1);
	\draw[thick,->] (c1) .. controls +(1,0.4) ..  (d1);
	\draw[thick,->] (d1) .. controls +(1,0.4) ..  (e1);
	\draw[thick,->] (e1) .. controls +(1,0.0) and +(0.0,1) ..  (e1);

	\node[label=below:$\scriptstyle\Delta(i_2-1)+1$,shape=circle,draw] (a2) at (-6,-2) {};
	\node[label=below:$\scriptstyle\Delta(i_2-1)+2$,shape=circle,draw] (b2) at (-4,-2) {};
	\node (c2) at (-2,-2) {$\cdots$};
	\node[label=below:$\scriptstyle\Delta i_2-1$,shape=circle,draw] (d2) at (0,-2) {};
	\node[label=below:$\scriptstyle\Delta i_2$,shape=circle,draw] (e2) at (2,-2) {};
	\draw[thick,->] (a2) .. controls +(1,0.4) ..  (b2);
	\draw[thick,->] (b2) .. controls +(1,0.4) ..  (c2);
	\draw[thick,->] (c2) .. controls +(1,0.4) ..  (d2);
	\draw[thick,->] (d2) .. controls +(1,0.4) ..  (e2);
	\draw[thick,->] (e2) .. controls +(1,0.0) and +(0.0,1) ..  (e2);

	\draw[thick,->] (e1) .. controls +(-1.2,-0.8)  .. node[left] {$t$} (d2);
\end{tikzpicture}
\caption{Structure of $G^{\Delta}(t)$}
\end{figure}

\begin{proposition}\label{pro:ergodic}
In the decentralized model, for every $j$ in $[N]$, there exists some time $\theta_j \geq t_0$ such that for  
	each 
	$$ \forall m\in [\Delta N]\ : \ P_{m,\Delta j}(\theta_j)  >0 \enspace.$$
\end{proposition}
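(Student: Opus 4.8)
The plan is to show that the non-decreasing family of sets $S_j^\Delta(t)$ eventually exhausts $[\Delta N]$, and to take $\theta_j$ to be any time past which it has stabilized. By Lemma~\ref{lem:SDeltaincrease} the family $\big(S_j^\Delta(t)\big)_{t \geq t_0}$ is non-decreasing, and it is bounded by the finite set $[\Delta N]$; hence it is eventually constant, equal to some $S^*$ for all $t$ beyond some time $T$. Writing $S = \{\, i \in [N] : \Delta i \in S^* \,\}$, which equals $S_j(t)$ for every $t \geq T$, it suffices to establish $S = [N]$: indeed, for any stationary time $t \geq T$ we have $S_j^\Delta(t) = S_j^\Delta(t+1)$, so Lemma~\ref{lem:key} makes each block $P_{\Delta(i-1)+1,\Delta j}(t), \dots, P_{\Delta i,\Delta j}(t)$ positive for every $i \in S = [N]$, giving $S^* = [\Delta N]$ and allowing us to set $\theta_j = T$.

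To prove $S = [N]$, I would argue by contradiction, assuming $S \subsetneq [N]$; note that $S$ is nonempty since $j \in S$ by Lemma~\ref{lem:S0}. Since $S_j^\Delta(t) = S_j^\Delta(t+1)$ for every $t \geq T$, Lemma~\ref{lem:incoming} applies at each such time and shows that $S$ has no incoming edge in any graph $G(t+1)$; equivalently, no edge $(i,k)$ with $i \notin S$ and $k \in S$ occurs at any time $s \geq T+1$. But assumption~D1 at time $T+1$ asserts that $\big([N], \bigcup_{s \geq T+1} E(s)\big)$ is strongly connected, so there is a directed path from any fixed vertex outside $S$ to $j \in S$; this path leaves $[N]\setminus S$ and enters $S$, hence contains an edge crossing from $[N]\setminus S$ into $S$, realized at some time $s \geq T+1$. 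That crossing edge is precisely an incoming edge to $S$ occurring after $T$, contradicting the previous paragraph and forcing $S = [N]$. (Alternatively, the temporal path of Lemma~\ref{lem:permanentconnectivity}, built from infinitely-occurring edges and hence placeable entirely after $T$, supplies the same crossing edge.)

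The step I expect to demand the most care is the edge-orientation bookkeeping in the contradiction. One must match the graph-theoretic crossing edge produced by strong connectivity --- tail outside $S$, head inside $S$ --- with the precise meaning of an incoming edge to $S_j(t)$ used in Lemma~\ref{lem:incoming}, namely an agent $i \notin S$ that hears from an agent $k \in S$ (so that $A_{i,k}(s) > 0$ together with $P_{\Delta i,\Delta j}(s-1)=0$ and $P_{\Delta k,\Delta j}(s-1)>0$). One must also be sure the crossing edge is realized at a time $s \geq T+1$, where stationarity legitimately licenses Lemma~\ref{lem:incoming} at $t = s-1 \geq T$; this is exactly why invoking D1 at time $T+1$, rather than at $t_0$, matters.
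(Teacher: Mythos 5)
Your proof is correct, but it takes a genuinely different route from the paper's. The paper argues constructively at the level of the lifted graphs: it takes the communication graph $G^{\Delta}(t)$ of $A^{\Delta}(t)$, observes that each block $\Delta(i-1)+1,\cdots,\Delta i$ forms a path with a self-loop at $\Delta i$, and then uses Lemma~\ref{lem:permanentconnectivity} (temporal paths in $G_0$ whose edges occur at times spaced by at least $\Delta$) to build, for every $m\in[\Delta N]$, an explicit walk in $G^{\Delta}_0$ from $m$ to $\Delta j$ whose $k$-th edge lives in $E^{\Delta}(t_0+k)$; the existence of that walk is exactly the positivity of $P_{m,\Delta j}$ at the corresponding time, and $\theta_j$ is the max over $m$. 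You instead never touch the lifted graph structure or Lemma~\ref{lem:permanentconnectivity}: you exploit monotonicity (Lemma~\ref{lem:SDeltaincrease}) to get stabilization of $S_j^{\Delta}(t)$ at a finite time $T$, rule out $S_j(T)\subsetneq[N]$ by playing Lemma~\ref{lem:incoming} (no incoming edge after stabilization) against condition~D1 invoked at time $T+1$, and then fill out the whole column via Lemma~\ref{lem:key}. Your edge-orientation bookkeeping is right: an incoming edge of $S$ in the paper's convention is $(i,k)$ with $i\notin S$, $k\in S$, $A_{i,k}(s)>0$, which is exactly the crossing edge produced by a directed path from a vertex outside $S$ to $j\in S$ in $\bigl([N],\bigcup_{s\geq T+1}E(s)\bigr)$. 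What each approach buys: the paper's path construction is self-contained and adapts transparently to the $j_0$-oriented weakening used for Theorem~\ref{thm:decent*} (condition~D*); your argument is shorter, reuses the stationarity lemmas the paper develops anyway for controlling $\pi_j$, and also adapts to D* since you only need paths terminating at $j$, i.e., $j$-orientation of the union graph. Neither argument yields an effective bound on $\theta_j$, consistent with the paper's remark that D1 alone cannot provide one.
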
	
\begin{proof}
At each time $t\in \IN$, let us consider the directed graph $G^{\Delta}(t)$
	to be the communication graph of~$A^{\Delta}(t)$, i.e., 
	$$G^{\Delta}(t) = ([\Delta N]\,, E^{\Delta}(t)) \enspace, $$ where	
	$(m,n)\in E^{\Delta}(t)$ if and only if $A^{\Delta}_{m,n}(t) >0$, 
	and let $$G^{\Delta}_0 = ([\Delta N]\,, \cup_{t\geq t_0} E^{\Delta}(t)) \enspace.$$
Properties~(1) and~(3) of the matrix $A^{\Delta}(t)$ imply that for each index $i\in [N]$, 
	the sequence of nodes $\Delta (i-1)+1, \cdots, \Delta i$ is a path in~$G^{\Delta}(t)$,
	and that there is a self-loop at node $\Delta i$.
Further if $A_{i_1,i_2}(t) >0$, then property~(2) for $A^{\Delta}(t)$ guarantees that 
	there is an edge in $G^{\Delta}(t)$ from node $\Delta i_1$ to some node
	in the path $\Delta (i_2-1)+1, \cdots, \Delta i_2$ 
	(namely,  node $\Delta i_2-1$ in Figure~1).

Let $m\in [\Delta N]$;	
	using Lemma~\ref{lem:permanentconnectivity}	and the above properties of the 
	graphs $G^{\Delta}(t)$, we inductively construct a path $m_{\ell}, \cdots, m_0$ in the graph $G^{\Delta}_0$
		starting at $m_{\ell}=m$, ending at $m_0 = \Delta j$, and such that each edge $(m_k,m_{k-1})$
		is in $E^{\Delta}(t_k)$ with 
		$t_k = t_{0} + k$.
Let us  denote $\theta_j (m)= t_{\ell}$;  by definition of the matrix $P(\theta_j(m))$, the existence of 
	this path is equivalent to $P_{m,\Delta j}(\theta_j(m))  >0$.
Finally, we let $$\theta_j = \max_{m\in [\Delta N]} \big (\theta_j (m) \big ) $$
	and we use Lemma~\ref{lem:SDeltaincrease} to complete the proof.
\end{proof}

We now give a condition  which ensures that the function $\pi_j$ does not decrease 
	when the set $S_j^{\Delta}$ remains stationnary.
	
\begin{lemma}\label{lem:outgoing}
If $S_j^{\Delta}(t) = S_j^{\Delta}(t+1)$ and $S_j(t)$ has  no outgoing edge in the graph $G(t+1)$, 
	then $\pi_j(t+1) \geq \pi_j(t)$.
\end{lemma}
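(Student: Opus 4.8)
The plan is to show that every positive entry of the $\Delta j$-th column of $P(t+1)$ is at least $\pi_j(t)$; since $\pi_j(t+1)$ is by definition the smallest such positive entry, this immediately yields $\pi_j(t+1) \geq \pi_j(t)$. I fix an arbitrary index $m \in S_j^{\Delta}(t+1)$, so that $P_{m, \Delta j}(t+1) > 0$, and I distinguish two cases according to whether or not $m$ is a multiple of $\Delta$, mirroring the dichotomy between the recurrence relations R1 and R2.

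In the easy case, $m$ is not a multiple of $\Delta$. Then relation R1 gives $P_{m, \Delta j}(t+1) = P_{m+1, \Delta j}(t)$, which is positive by hypothesis, so $m+1 \in S_j^{\Delta}(t)$ and hence $P_{m+1, \Delta j}(t) \geq \pi_j(t)$. This bound uses neither hypothesis of the lemma.

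The substantive case is $m = \Delta i$ for some $i \in [N]$, and it is here that both hypotheses are used in tandem. First, the stationarity hypothesis $S_j^{\Delta}(t) = S_j^{\Delta}(t+1)$ forces $\Delta i \in S_j^{\Delta}(t)$, that is, $i \in S_j(t)$. Next I invoke the no-outgoing-edge hypothesis: whenever $A_{i,k}(t+1) > 0$, the edge $(i,k)$ belongs to $G(t+1)$, and since $S_j(t)$ has no outgoing edge in $G(t+1)$, the endpoint $k$ must again lie in $S_j(t)$. For each such $k$, Lemma~\ref{lem:key} applies (its hypothesis $S_j^{\Delta}(t) = S_j^{\Delta}(t+1)$ holds) and guarantees that all of $P_{\Delta (k-1)+1, \Delta j}(t), \dots, P_{\Delta k, \Delta j}(t)$ are positive; in particular the entry $P_{m_k, \Delta j}(t)$ appearing in relation R2 satisfies $P_{m_k, \Delta j}(t) \geq \pi_j(t)$, irrespective of which representative $m_k$ in that block is selected. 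Substituting into R2 and using that the nonzero coefficients $A_{i,k}(t+1)$ sum to $1$ by assumption A1, I will obtain $P_{\Delta i, \Delta j}(t+1) = \sum_{k} A_{i,k}(t+1)\, P_{m_k, \Delta j}(t) \geq \pi_j(t) \sum_{k} A_{i,k}(t+1) = \pi_j(t)$, both sums being taken over the indices $k$ with $A_{i,k}(t+1) > 0$.

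I expect the only real obstacle to be this case $m = \Delta i$, precisely because it is the sole place where the two hypotheses must cooperate: stationarity simultaneously locates $i$ inside $S_j(t)$ and activates Lemma~\ref{lem:key}, while the no-outgoing-edge condition confines the positive coefficients of the convex combination in R2 to columns indexed by $S_j(t)$, where the relevant entries are uniformly bounded below by $\pi_j(t)$. The stochasticity of $A(t+1)$ then converts this lower bound into a genuine convex average and closes the argument.
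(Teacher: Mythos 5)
Your proof is correct and follows essentially the same route as the paper's: a case split on whether $m$ is a multiple of $\Delta$, with relation R1 handling the first case, and relation R2 combined with Lemma~\ref{lem:key}, the no-outgoing-edge hypothesis, and stochasticity of $A(t+1)$ handling the second. The only differences are cosmetic bookkeeping (you index over $S_j^{\Delta}(t+1)$ rather than $S_j^{\Delta}(t)$, and you use the no-outgoing-edge condition to place the positive coefficients inside $S_j(t)$ rather than to show the restricted sum equals $1$), which amounts to the same argument.
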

\begin{proof}
Let $m \in S_j^{\Delta}(t)$, i.e., $P_{m, \Delta j}(t) >0$.
We consider two cases:
\begin{enumerate}
	\item $m$ is not a multiple of $\Delta$.
	By relation~R1, we have $$P_{m, \Delta j}(t+1) = P_{m+1, \Delta j}(t) \enspace.$$
	By Lemma~\ref{lem:SDeltaincrease}, we get $m \in S_j^{\Delta}(t+1)$, i.e., $P_{m, \Delta j}(t+1) >0$.
	Therefore, $m+1 \in S_j^{\Delta}(t)$ and $P_{m, \Delta j}(t+1) \geq \pi_j(t)$.
	\item $m$ is a multiple of $\Delta$, i.e., $m = \Delta i$ for some index $i\in [N]$.
	By relation~R2, we~have 
		$$P_{\Delta i, \Delta j}(t +1) = \sum_{k=1}^N A_{i, k}(t+1)\  P_{n_k, \Delta j}(t)
				\enspace,$$
	for some $n_k \in \{\Delta (k-1) +1, \dots, \Delta k \}$.
	It follows that 
	$$P_{\Delta i, \Delta j}(t +1) \geq 
		\sum_{k \in S_j(t)} A_{i, k}(t+1)\  P_{n_k, \Delta j}(t) \enspace.$$
	Using Lemma~\ref{lem:key} and the definition of $\pi_j(t)$, we obtain 
	$$P_{\Delta i, \Delta j}(t +1) \geq \pi_j(t) \times \sum_{k \in S_j(t)} A_{i, k}(t) \enspace.$$
	Since $i \in S_j(t)$ and $S_j(t)$ has  no outgoing edge in the graph~$G(t+1)$,
		we have $$ \sum_{k \in S_j(t)} A_{i, k}(t+1)  = \sum_{k=1}^N A_{i, k}(t+1) \enspace.$$
	The latter sum is equal to 1 as the matrix $A(t+1)$ is stochastic.
	Hence $P_{\Delta i, \Delta j}(t+1) \geq \pi_j(t)$.
\end{enumerate}
\end{proof}

We now put it all together to prove Theorem~\ref{thm:decent}.
Let $$\theta = \max_{j=1\cdots,N} \big(\theta_j \big)$$
	where the $\theta_j$'s are defined with regard to Proposition~\ref{pro:ergodic}.
Combining Lemma~\ref{lem:S0}, assumption~A3 with Lemmas~\ref{lem:phi},~\ref{lem:incoming}, and~\ref{lem:outgoing}
	we obtain 
	$$\forall t\geq \theta \ :\ \forall j\in [N] \ : \ 
	\pi_j(t) \geq \alpha^{\Delta N} \enspace.$$
By Corollary~\ref{cor:boundnorm}, we derive $ \lVert P(\theta) \lVert_{\bot} \leq 1 - N \alpha^{\Delta N}$.
In other words, we have shown that for any $t_0\geq \Delta -1$, there exists $\theta \geq t_0$
	such that 
	$$\lVert A^{\Delta}(\theta)\cdots  A^{\Delta}(t_0) \lVert_{\bot} \leq 1 - N \alpha^{\Delta N}\enspace.$$ 
Together with the sub-multiplicativity of the seminorm $\lVert  \cdot \lVert_{\bot}$,
	this implies that 
	$$\lim_{t\rightarrow +\infty} \lVert A^{\Delta}(t)\cdots  A^{\Delta}(0) \lVert_{\bot} =0 \enspace.$$
Theorem~\ref{thm:coord} follows from Proposition~\ref{pro:convergence} and the definition of
	 vector $X(t)$.

\section{Generalizations and remarks}

In this section, we present diverse strengthenings of Theorems~\ref{thm:coord}
 	and~\ref{thm:decent} which are obtained from direct generalizations of the arguments
	developed in the proofs of these theorems, or just simply by closely examining the proofs.
We conclude by a discussion of examples which demonstrate the role of the various
	assumptions in Theorems~\ref{thm:coord}--\ref{thm:decent*}.

\subsection{Eventual condition and coarser granularity}

Our proofs of theorems~\ref{thm:diamondC} and~\ref{thm:diamondD} are similar
	in the way we generalize the arguments in the proofs of Theorems~\ref{thm:coord}
 	and~\ref{thm:decent}, respectively.
We present only one of them, the proof of Theorem~\ref{thm:diamondD}.

Suppose that condition~$\Diamond$D holds for some time $T_0$ and some positive integer~$\Phi$.
We use the notation introduced in Section~\ref{sec:proofs}, Lemmas~\ref{lem:S0}--\ref{lem:incoming},
	and Lemma~\ref{lem:outgoing} for some time parameter $t_0\geq \max (T_0,\Delta -1)$.

Assume that $$S_j^{\Delta}(t) = \cdots = S_j^{\Delta}(t+\Phi) \enspace.$$
By definition of the sets $S_j^{\Delta}$ and $S_j$, all the sets
	$S_j(t),  \cdots , S_j (t+\Phi)$ are then equal to some set of nodes, which we denote by $S$.
Repeated application of Lemma~\ref{lem:incoming} show that 
	$S$ has no incoming edge 
	in each of the communication graphs~$G(t+1), \cdots, G(t+\Phi)$.
Hence $S$ has no incoming edge in~$H(t+1)$. 
Condition~$\Diamond$D then guarantees that $S$ has no outgoing edge in~$H(t+1)$.

Suppose now that $S$ has an outgoing edge $(i,k)$ in some communication graph
	$G(t+\varphi)$ with $\varphi \in [\, \Phi \,]$.
Because of the self-loop assumption~A2, we deduce that $(i,k)$ is an outgoing edge of~$S$
	in~$H(t+1)$, a contradiction.
Therefore, the set of nodes~$S$ has no outgoing edge in each of the communication graphs
	$G(t+1), \cdots, G(t+\Phi)$, and  Lemma~\ref{lem:outgoing} implies that
	$\pi(t+\Phi)\geq \cdots \geq \pi(t)$.
	
Using  the same arguments as for Theorem~\ref{thm:decent}, we conclude that there exists 
	$\theta \geq t_0$
	such that 
	$$\lVert A^{\Delta}(\theta)\cdots  A^{\Delta}(t_0) \lVert_{\bot} 
				\leq 1 - N\alpha^{\Phi \Delta N}\enspace.$$ 
Theorem~\ref{thm:diamondD} then follows from sub-multiplicativity of the seminorm 
	$\lVert  \cdot \lVert_{\bot}$, from Proposition~\ref{pro:convergence},
	and from the definition of $X(t)$.
				
\subsection{Partial complete reducibility}				
	
The proof of Theorem~\ref{thm:decent*} is based on the remark that 
	it suffices that one column 
	of a stochastic matrix~$A$ be positive to ensure that $A$ is contracting 
	(with respect to the seminorm~$\lVert \cdot \lVert_{\bot}$); 
	see Corollary~\ref{cor:boundnorm}.

Suppose that condition~D* holds for  some~$j_0\in [N]$.
We use the notation introduced in Section~\ref{sec:proofs}, Lemmas~\ref{lem:S0}--\ref{lem:incoming},
	and Lemma~\ref{lem:outgoing} for some time parameter $t_0\geq \Delta -1$
	and for node~$j_0$.
A close examination of the proof of Proposition~\ref{pro:ergodic}
	reveals that if the directed graph $([N]\,, \cup_{s\geq t_0}E(s))$ is 
	$j_0$-oriented (first part in condition~D*), then the $\Delta j_0$-th column of 
	the matrix~$P$ is eventually positive, i.e., there exists some time 
	$\theta_0 \geq t_0$ such that for  
		each 
		$$ \forall m\in [\Delta N]\ : \ P_{m,\Delta j_0}(\theta_0)  >0 \enspace.$$
Then Lemmas~\ref{lem:incoming} and~\ref{lem:outgoing} lead us to relax the complete reducibility
	property into the following property for a directed graph $G=(V,E)$, and a node $j_0\in V$.
\begin{description}
	\item[P$_{\mathbf{\mathit{j_0}}}$:]	There exists no subset of $V$ containing node~$j_0$ with 
	an outgoing edge and no incoming edge.
\end{description}

We now  study property~P$_{j_0}$ and  give
	an equivalent, but more tractable expression of it.
For that, we consider the {\em condensation $G^*$ of}~$G$ defined as the  directed  acyclic graph
	obtained by contracting each strongly connected component of~$G$ into a
	single node.
We denote by $i^*$ the strongly connected component of some node~$i$.

The next lemma, whose proof is obvious, allows us to restrict ourselves to the 
	case of acyclic graphs.
\begin{lemma}\label{lem:condensP}
A directed graph $G$ satisfies P$_{j_0}$ if and only if the condensation $G^*$ of~$G$
	satisfies P$_{j_0^*}$.
\end{lemma}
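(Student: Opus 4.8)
The plan is to reduce the statement to a combinatorial correspondence between the subsets of $V$ that witness a violation of P$_{j_0}$ and the subsets of the node set of $G^*$ that witness a violation of P$_{j_0^*}$. The negation of P$_{j_0}$ asserts the existence of a set $S\subseteq V$ with $j_0\in S$ that has an outgoing edge but no incoming edge, and symmetrically for $G^*$ with $j_0^*$; I would establish the equivalence of these two existence statements.

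The key observation I would isolate first is that \emph{any} set $S$ with no incoming edge is a union of strongly connected components of $G$. Indeed, if $S$ contained a node $v$ but omitted some node $w$ in the same strongly connected component, then a directed path from $w$ to $v$ (which exists by strong connectivity) would cross the boundary of $S$ from outside to inside, producing an incoming edge, a contradiction. Hence every witness $S$ for the negation of P$_{j_0}$ is a union of strongly connected components, and the map $S\mapsto S^{*}=\{\,i^{*}\mid i\in S\,\}$ sends it to a well-defined set of nodes of $G^{*}$ containing $j_0^{*}$.

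Next I would check that this map, restricted to unions of strongly connected components, is a bijection onto the subsets of the node set of $G^{*}$, and that it preserves the two relevant features. Since an edge of $G$ joining two distinct strongly connected components corresponds to an edge of $G^{*}$ and conversely, while an edge internal to a component never crosses the boundary of a union of components, $S$ has an incoming (resp. outgoing) edge in $G$ if and only if $S^{*}$ has an incoming (resp. outgoing) edge in $G^{*}$. In particular $S$ has no incoming edge and has an outgoing edge exactly when $S^{*}$ does.

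Combining these two points yields both directions: a witness $S$ violating P$_{j_0}$ gives the witness $S^{*}$ violating P$_{j_0^{*}}$, and conversely a witness $T$ violating P$_{j_0^{*}}$ pulls back to $S=\bigcup_{c\in T}c$, a union of components containing $j_0$ with an outgoing edge and no incoming edge, violating P$_{j_0}$. The only point requiring care, entirely routine, which is why the statement is flagged as obvious, is the bookkeeping that boundary-crossing edges of $G$ match edges of $G^{*}$ exactly, so that internal edges are correctly excluded when translating the outgoing-edge condition.
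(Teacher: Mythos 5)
Your proof is correct. The paper gives no argument for this lemma---it is explicitly declared obvious and left to the reader---and your write-up supplies exactly the intended routine verification: any witness set violating P$_{j_0}$ (having no incoming edge) must be a union of strongly connected components, such unions correspond bijectively to node sets of the condensation $G^*$, and this correspondence preserves the existence of incoming and outgoing edges, so witnesses violating P$_{j_0}$ in $G$ and witnesses violating P$_{j_0^*}$ in $G^*$ are in exact correspondence.
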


In turn, property~P$_{j_0}$ on a directed acyclic graph~$G$ admits a simple equivalent expression 
	in terms of the  connected components of~$G$.

\begin{lemma}\label{lem:sink}
If~$G$ is a  directed  acyclic graph, then~$G$ satisfies P$_{j_0}$ if and only if
 	(a)~node~$j_0$ is the  one and only sink of its own connected component, and~(b)
	 every other connected component of~$G$ reduces to a single isolated node.
\end{lemma}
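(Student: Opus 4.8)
The plan is to prove both implications of the biconditional, working with the characterization of $\mathrm{P}_{j_0}$ as the nonexistence of a subset $S \ni j_0$ having an outgoing edge but no incoming edge. Throughout I exploit that $G$ is acyclic, so its condensation is itself and the sinks (nodes with no outgoing edge) are well defined. I would first set up notation: let $C$ denote the connected component of $j_0$ in the undirected version of $G$, and recall that a source-free/sink structure in a finite DAG is governed by the fact that every nonempty subset of a finite DAG contains at least one sink of the whole graph along any maximal path.

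For the \emph{if} direction, I would assume conditions (a) and (b) and show $\mathrm{P}_{j_0}$ holds. Suppose for contradiction that some $S \ni j_0$ has an outgoing edge and no incoming edge. Since $S$ has no incoming edge, $S$ is a union of connected components (any edge from outside $S$ into $S$ would be an incoming edge, and $S$ has none, so no undirected edge crosses the boundary into $S$; combined with acyclicity this forces $S$ to be closed under the component relation). In particular $S$ contains $C$, the whole connected component of $j_0$. But $S$ has an outgoing edge, and since $S$ is a union of components it can have no edge leaving it to another vertex in the same component, so the outgoing edge must originate within $C$ and leave $C$ --- which is impossible because edges never cross between distinct connected components. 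Hence the outgoing edge leaves from some node of $C$ to a node inside $C$ but outside $S$; yet $S \supseteq C$, contradiction. The delicate point is handling the case where $S$ strictly contains $C$ together with some isolated singleton components from (b): those singletons have no outgoing edges by hypothesis (b), so they cannot supply the required outgoing edge either, and within $C$ the unique sink $j_0 \in S$ means no node of $C\setminus S$ is reachable by an outgoing edge from $S$ without creating a path back, so I would trace such an edge to contradict either acyclicity or the uniqueness of the sink.

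For the \emph{only if} direction, I would prove the contrapositive: if (a) or (b) fails, I exhibit a forbidden set $S$. If (a) fails, then either $j_0$ is not a sink of $C$, or $C$ has a second sink. If $j_0$ is not a sink, it has an outgoing edge; I would then take $S$ to be the set of all vertices from which $j_0$ is \emph{not} reachable, adjoined with $j_0$ --- more cleanly, I take $S$ to be the "upward closure issue" set, but the cleanest choice is to let $S$ be the component $C$ itself minus everything below a second sink. Concretely, if $C$ has a sink $s \neq j_0$, let $S = C \setminus \{$vertices on directed paths ending at $s$ only$\}$; the robust construction is to take $S$ to be the set of ancestors of $j_0$ (including $j_0$) within $C$: this set contains $j_0$, has no incoming edge in the DAG (an incoming edge would extend the ancestor set), and has an outgoing edge precisely because $j_0$ fails to be the unique sink. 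If instead (b) fails, some other connected component $C'$ does not reduce to an isolated node, so $C'$ contains an edge; taking $S = \{j_0\} \cup (C' \setminus \{\text{the sink of }C'\})$ or, more simply, $S = \{j_0\} \cup A$ where $A$ is a sink-ancestor set of a non-sink vertex in $C'$, yields a set containing $j_0$ with an outgoing edge (into the sink of $C'$) and no incoming edge, since $A$ is ancestor-closed and $j_0$ is isolated from $C'$.

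The main obstacle will be the bookkeeping in constructing the witness set $S$ in the \emph{only if} direction so that it \emph{simultaneously} contains $j_0$, is ancestor-closed (hence has no incoming edge in the acyclic $G$), and is guaranteed to have an outgoing edge. The clean device I would use is the \textbf{ancestor set}: for any vertex $v$, the set of all $u$ with a directed path $u \to \cdots \to v$ is ancestor-closed and therefore has no incoming edge in a DAG; it has an outgoing edge iff $v$ is not a sink. So whenever (a) or (b) fails I produce a non-sink vertex $v$ reachable from or equal to a vertex that I can freely unite with $j_0$ while preserving ancestor-closedness, and the finiteness plus acyclicity guarantees all these sets are well behaved. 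Once this witness-set lemma is isolated, both directions reduce to short structural checks, and the whole argument is elementary --- consistent with the remark in the paper that the proof of the preceding Lemma~\ref{lem:condensP} is obvious.
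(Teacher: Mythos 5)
Your proof has two genuine gaps, one in each direction. In the \emph{if} direction, your opening claim --- that a set $S$ with no incoming edge must be a union of connected components, hence $S \supseteq C$ --- is false: having no incoming edge only means that every edge crossing the boundary of $S$ is \emph{outgoing}, and indeed your $S$ is assumed to have an outgoing edge, so it is never a union of components. The ensuing contradiction therefore rests on an unjustified premise, and your closing remark that you ``would trace such an edge'' is only a gesture at the argument that is actually needed. That argument (the one in the paper) runs as follows: let $(k_0,k_1)$ be the outgoing edge of $S$; by (b) the edge lies inside the component $C$ of $j_0$, so $k_1\in C\setminus S$; by (a), $k_1\neq j_0$ is not a sink, so it has an outgoing neighbour $k_2$; iterating, as long as the walk stays outside $S$ each node has an outgoing neighbour, and since $G$ is finite and acyclic the walk cannot continue forever, so it must enter $S$, producing an incoming edge of $S$. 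None of this is recoverable from your ``union of components'' premise.

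In the \emph{only if} direction, your witness for the failure of (b) is broken: you take $S=\{j_0\}\cup A$ with $A$ an ancestor-closed subset of the other component $C'$, and you argue $S$ has no incoming edge ``since $A$ is ancestor-closed and $j_0$ is isolated from $C'$.'' But $j_0$ is in general \emph{not} isolated inside its own component $C$: whenever $j_0$ has an in-neighbour $u\in C$ (which is the generic situation, and is independent of whether (b) fails), the edge $(u,j_0)$ is an incoming edge of your $S$, so $S$ is not a valid witness. The paper avoids this by taking the \emph{whole} component of $j_0$ together with the ancestor set of the tail of an edge $(i,k)$ lying outside it, i.e.\ $S=J_0\cup I^-$; edges into $J_0$ then come only from $J_0$ itself, and $(i,k)$ is outgoing by acyclicity. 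A further inconsistency: your ``clean device'' asserts that the ancestor set of $v$ has an outgoing edge \emph{iff} $v$ is not a sink, but the ``only if'' half of that is false (the ancestor set of a sink can have outgoing edges, e.g.\ $u\to j_0$, $u\to s$), and in the second-sink subcase of (a) your own witness (the ancestor set of the sink $j_0$) works only \emph{because} that half is false --- its outgoing edge must instead be produced by a connectivity argument ($A^-(j_0)$ is a proper subset of the connected set $C$, since the second sink $s\notin A^-(j_0)$, and every boundary edge is outgoing), which you never give. The paper sidesteps this entirely with a different witness there: the complement of $\{i\}$ for a second sink $i$.
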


\begin{proof}
For any node~$i$, let $I$, $I^-$, and $I^+$ denote  the connected component of~$i$,
 	the set of $i$'s ancestors, and the set of $i$'s descendants (both including~$i$), respectively.

First assume that (a) and (b) both hold.
Let $S$ be any subset of nodes with $j_0\in S$, and suppose that $(k_0,k_1)$ is an outgoing edge of $S$.
Then,  $k_0$ and $k_1$ are in the same connected component.
Condition~(b) implies that $k_1\in J_0$. 
Since $k_1 \notin S$, we have $k_1 \neq j_0$.
By condition~(a), $k_1$ is not a sink, and let $k_2$ be an outgoing neighbor of~$k_1$. 
If $k_2 \in S$, then $(k_1,k_2)$ is an incoming edge of~$S$;
	otherwise, we repeat the argument with $k_2$ instead of~$k_1$. 
In this way, we construct a sequence of nodes $k_1, k_2, \cdots$
	in the complement of~$S$.
Because $G$ is acyclic, this sequence is finite, i.e., $S$ has an incoming edge.

Conversely, suppose that either (a) or (b)  does not hold.
We consider the following three cases, and show that for each of them,
	$G$ does not satisfy P$_{j_0}$.
\begin{enumerate}
\item The node~$j_0$ is not a sink, i.e., $j_0$ has at least an outgoing neighbor~$k$.
Then, the set $J_0^-$ has no incoming edge, but an outgoing edge, namely $(j_0,k)$.
\item The node~$j_0$ is a sink and there is another sink, denoted $i$, in $J_0$.
Then,~$i$ is not an isolated node and has an incoming neighbor~$k$.
The complement of $\{i\}$ contains~$j_0$,  has no incoming edge, but an outgoing edge, namely $(k,i)$. 
\item There exists some edge $(i,k)$ with $i$ and $k$ both outside $J_0$.
Then, the set $J_0\cup  I^-$ has no incoming edge, but an outgoing edge, namely $(i,k)$.
\end{enumerate}
\end{proof}

As observed in~\cite{GB81} (Section~2, page 12), condition~(a) in Lemma~\ref{lem:sink}  can be expressed 
	in terms of $j_0$-orientation.
\begin{lemma}\label{lem:gb81}
Let $j_0$ be any node of a connected and directed acyclic graph~$G$.
Then, node~$j_0$ is the one and only sink of~$G$ if and only if $G$ is $j_0$-oriented.
\end{lemma}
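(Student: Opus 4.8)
The plan is to prove the two implications separately, converting "follow the edges" arguments into statements about sinks by exploiting finiteness and acyclicity of $G$.

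For the direction asserting that $j_0$-orientation forces $j_0$ to be the unique sink, I would first show that $j_0$ is itself a sink. If $j_0$ had an outgoing edge $(j_0,k)$ with $k\neq j_0$, then by $j_0$-orientation there is a directed path from $k$ back to $j_0$; prepending the edge $(j_0,k)$ to this path produces a directed cycle, contradicting acyclicity. For uniqueness, suppose some $i\neq j_0$ were also a sink. By $j_0$-orientation there is a directed path from $i$ to $j_0$, and since $i\neq j_0$ this path has length at least one, so its first edge is an outgoing edge of~$i$ --- contradicting that $i$ is a sink. Hence $j_0$ is the only sink.

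For the converse, I would invoke the standard fact that in a finite acyclic digraph every maximal directed path terminates at a sink. Given any node~$i$, I build a directed path starting at~$i$ by repeatedly following an outgoing edge; because $G$ is acyclic no node can repeat along the path, and because the vertex set is finite the process must stop, necessarily at a node with no outgoing edge, that is, at a sink. By hypothesis the only sink is $j_0$, so this path terminates at $j_0$, witnessing that $G$ is $j_0$-oriented.

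The whole argument is essentially bookkeeping; the one point requiring care is the termination claim in the converse, which I expect to be the main (though routine) obstacle. It rests on combining acyclicity --- which forbids revisiting a node and hence rules out an infinite walk --- with finiteness of the vertex set. Connectivity of~$G$ plays no essential role beyond what the hypotheses already guarantee: $j_0$-orientation already forces the undirected version of~$G$ to be connected, and conversely a single sink forces weak connectivity, since each weakly connected component of a finite directed acyclic graph contains at least one sink.
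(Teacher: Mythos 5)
Your proof is correct. The paper in fact leaves this lemma's proof to the reader (attributing the observation to~\cite{GB81}), and your argument --- deriving a directed cycle from an outgoing edge at~$j_0$ for one direction, and following a maximal directed path to a sink for the converse --- is exactly the routine argument intended; your closing remark that connectivity is redundant, being implied by either hypothesis, is also accurate.
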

We leave the simple proof of Lemma~\ref{lem:gb81} to the reader.

Finally, we compare orientation in a directed graph and in its condensation
	in the following lemma whose proof is trivial. 
\begin{lemma}\label{lem:condensor}
The directed graph $G$ is $j_0$-oriented if and only if the condensation of~$G$
		is $j_0^*\!$-oriented.
\end{lemma}

By combining the above four lemmas, we obtain an equivalent form of~P$_{j_0}$.		

\begin{proposition}\label{pro:Pj}
Let~$G$ be a directed  graph, and let~$j_0$ be any node of~$G$.
The  following two properties are equivalent.
\begin{enumerate}
	\item $G$ satisfies P$_{j_0}$.
	\item The connected component of $j_0$ in~$G$ is $j_0$-oriented, and 
	every other connected component  is strongly connected.
\end{enumerate}
\end{proposition}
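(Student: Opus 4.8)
The plan is to prove Proposition~\ref{pro:Pj} by assembling the four preceding lemmas into a single chain of equivalences, using the condensation $G^*$ as the intermediate object that converts the general statement into the acyclic case handled by Lemma~\ref{lem:sink}. The target is the equivalence of~(1), namely that $G$ satisfies P$_{j_0}$, with~(2), namely that the connected component of $j_0$ is $j_0$-oriented while every other connected component is strongly connected.

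First I would apply Lemma~\ref{lem:condensP} to reduce~P$_{j_0}$ on $G$ to~P$_{j_0^*}$ on the condensation $G^*$, which is a directed acyclic graph by construction. Then, since $G^*$ is acyclic, Lemma~\ref{lem:sink} applies: $G^*$ satisfies~P$_{j_0^*}$ if and only if (a)~$j_0^*$ is the one and only sink of its own connected component in $G^*$, and (b)~every other connected component of $G^*$ reduces to a single isolated node. The next step is to translate conditions~(a) and~(b) back into statements about $G$. For condition~(a), I would invoke Lemma~\ref{lem:gb81} to rewrite ``$j_0^*$ is the unique sink of its connected component'' as ``that connected component of $G^*$ is $j_0^*$-oriented,'' and then Lemma~\ref{lem:condensor} to pull this orientation statement back to $G$, yielding that the connected component of $j_0$ in $G$ is $j_0$-oriented. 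For condition~(b), the key observation is that a connected component of $G^*$ reducing to a single isolated node is exactly the condition that the corresponding connected component of $G$ is a single strongly connected component, i.e.\ is strongly connected; I would spell out this correspondence between the components of $G$ and the components of $G^*$.

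The main subtlety — and the step I would treat most carefully — is the bookkeeping linking connected components of $G$ to connected components of $G^*$, and in particular verifying that ``every other connected component of $G^*$ is a single node'' is equivalent to ``every other connected component of $G$ is strongly connected.'' A connected component of $G$ maps onto a connected component of $G^*$ under contraction, and that image is a single node precisely when the whole component collapsed to one strongly connected component, which is the definition of the component being strongly connected. I would also note that Lemma~\ref{lem:gb81} requires the connected component to which it is applied, so invoking it on the component of $j_0^*$ in $G^*$ (rather than on all of $G^*$) is legitimate and necessary, since $G^*$ as a whole need not be connected.

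Putting these translations together produces the biconditional directly: $G$ satisfies P$_{j_0}$ iff $G^*$ satisfies P$_{j_0^*}$ (Lemma~\ref{lem:condensP}) iff (a)~and~(b) hold for $G^*$ (Lemma~\ref{lem:sink}) iff the component of $j_0$ in $G$ is $j_0$-oriented (Lemmas~\ref{lem:gb81} and~\ref{lem:condensor}) and every other component of $G$ is strongly connected (component correspondence). Since each lemma supplies an ``if and only if,'' no direction needs separate treatment and the proof is a clean composition of equivalences; the only real work is the careful statement of the component correspondence used for condition~(b).
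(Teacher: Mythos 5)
Your proof is correct and follows exactly the route the paper intends: the paper's ``proof'' of Proposition~\ref{pro:Pj} is precisely the statement that it follows by combining Lemmas~\ref{lem:condensP}, \ref{lem:sink}, \ref{lem:gb81}, and~\ref{lem:condensor}, which is the chain of equivalences you assemble. Your explicit treatment of the correspondence between connected components of~$G$ and of its condensation~$G^*$ (needed both to apply Lemma~\ref{lem:gb81} componentwise and to translate condition~(b)) fills in the only detail the paper leaves implicit.
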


\vspace{0.2cm}	

The end of the proof of Theorem~\ref{thm:decent*} is similar to the one of Theorem~\ref{thm:decent},
	except that the upper bound on the seminorm of matrix $P(\theta_0)$ is now
	$$\lVert A^{\Delta}(\theta_0)\cdots  A^{\Delta}(t_0) \lVert_{\bot} \leq 1 - \alpha^{\Delta N}\enspace.$$

\subsection{Examples} 

We now present two examples
	that demonstrate the roles of the self-loop 
	assumption A2 and of the conditions C, D. or D* in our convergence theorems.
In both we consider  the case of a synchronous system
		with 3 agents; in other words, $N=3$ and $\Delta =1$.
\vspace{0.1cm}		

In our first example, the sequence of matrices $(A(t))_{t\in \IN}$ is 3-periodic with
	$$A(0) = \left[ \begin{array}{ccc}
	              1 & 0 & 0  \\
	              0 & 0 & 1  \\
	              0 & 1 & 0 
	\end{array} 
	      \right],\   A(1) = \left[ \begin{array}{ccc}
		              0 & 0 & 1  \\
		              0 & 1 & 0  \\
		              1 & 0 & 0 
		\end{array} 
		      \right], \ 
			A(2) = \left[ \begin{array}{ccc}
			              0 & 1 & 0  \\
			              1 & 0 & 0  \\
			              0 & 0 & 1 
			\end{array} 
			      \right] \enspace. $$
The sequence of matrices  corresponds to the following scenario in the synchronous case.
Agent~1 communicates only with itself while agent~2 communicates with agent~3 and
	agent~3 communicates with agent~2, but none of the two agents~2 and~3 takes
	into account their own values.
This first round is then repeated infinitely often while rotating the communication graph.
We easily check that the algorithm actually keeps executing the  instruction $(x_1,x_2,x_3):= (x_3,x_2, x_1)$. 
Therefore, the algorithm does not achieve consensus, and does not even converge.
However, condition D and all the assumptions considered so far hold except 
	A2.
In fact, we even observe that the self-loops which occur at each node during 
every period of  duration 3 units of time
	do not help to achieve asymptotic consensus.
	
\vspace{0.3cm}	
We now briefly recall an example given in~\cite{BHOT05}, which
	shows how crucial is the fact that the agent~$j$ in conditions C or C* does not change over time.
Here, $x(0)=(0,1,0)$ and the  matrices in the sequence $(A(t))_{t\in \IN}$ are taken in the
	set $\{A_1,A_2,A_3\}$ where
	$$A_1 = \left[ \begin{array}{ccc}
	              1/2 & 0 & 1/2  \\
	              0 & 1 & 0  \\
	              0 & 0 & 1 
	\end{array} 
	      \right],\   A_2 = \left[ \begin{array}{ccc}
		              1 & 0 & 0  \\
		              0 & 1 & 0  \\
		              0 & 1/2 & 1/2 
		\end{array} 
		      \right], \ 
			A_3 = \left[ \begin{array}{ccc}
			              1 & 0 & 0  \\
			              1/2 & 1/2 & 0  \\
			              0 & 0 & 1 
			\end{array} 
			      \right] \enspace. $$
Given an increasing  sequence of  integers $(t_n)_{n\in \IN}$ with $t_0=0$ and to be chosen 
	suitably later,  we~let
	$$ A(t) = A_k $$
	when $ t_{3i +k-1} \leq t < t_{3i+k } $ for some non-negative integer~$i$.
This means, for instance, that 
	until time~$t_1$, agent~3 communicates with agent~1, and agent~1 forms the
	average of its own value and the value received from agent~3.
	
Let $(\epsilon_n)_{n\in \IN^*}$ be a sequence of positive reals  such
	that $\ell = \sum_{n=1}^{\infty} \epsilon_n <1/2$.
Times $t_1$ and $t_2$ are  chosen large enough to have $x_1(t_1) \geq 1-\epsilon_1$ 
	and $x_3(t_2) \leq \epsilon_1$.
Similarly, 	$t_3$ and $t_4$ are  chosen large enough to have $x_2(t_3) \geq 1-\epsilon_1 -\epsilon_2$ 
		and $x_1(t_4) \leq \epsilon_1 + \epsilon_2$, and so on.
The resulting vector $x(t)$ is such that each of its three entries is infinitely often
	at least equal to $1-\ell$ and infinitely often at most equal to $\ell$.
Since $\ell <1/2$, this proves that the sequence $(x(t))_{t\in \IN}$ is not convergent.

In this example, all the assumptions A, B, and D1 hold.
Moreover, the following weakening of~D* is satisfied:
 	at every time $t \in \IN$,  there is some $j \in [N]$ such that  the connected 
	component of $j$  in the communication graph $G(t)$ is $j$-oriented, and 
	every other connected component of~$G(t)$  is strongly connected.
Indeed, all the communication graphs have two connected components: one component 
	is reduced to 
	a single node with a self-loop, and the other one  is oriented with respect 
	to one single node (namely, node $k$ for the matrix $A_k$).

\section*{Acknowledgments}
I warmely thank Bernard Chazelle and Thomas Nowak for discussions and  helpful comments.

\bibliography{agents.bib}

\bibliographystyle{plain}

\end{document}